\documentclass{article} 
\usepackage{iclr2027_conference,times}

\iclrfinalcopy

\usepackage[utf8]{inputenc}
\usepackage[T1]{fontenc}
\usepackage[british]{babel}
\usepackage{hyperref}
\usepackage{url}
\usepackage{booktabs}
\usepackage{amsmath}
\usepackage{amsfonts}
\usepackage{amssymb}
\usepackage{amsthm}
\usepackage{mdframed}
\usepackage{caption}
\usepackage{subcaption}
\usepackage{float}
\usepackage{graphicx}
\usepackage{xcolor}
\usepackage{nicefrac}
\usepackage{microtype}
\usepackage{physics}
\usepackage{tikz}
\usetikzlibrary{shapes.geometric, arrows.meta, positioning, calc, backgrounds,
                decorations.pathmorphing}

\providecommand{\internallinenumbers}{}

\newcommand{\heroscale}{0.80}

\newtheorem{proposition}{Proposition}
\newtheorem{lemma}{Lemma}
\newtheorem{corollary}{Corollary}
\newtheorem{remark}{Remark}

\newcommand{\HH}{\mathbb{H}}
\newcommand{\Enc}{\mathrm{Enc}}
\newcommand{\SU}{\mathrm{SU}}
\newcommand{\Spin}{\mathrm{Spin}}

\title{Encryptability As a Coordinate Choice:\\ Depth-One Homomorphic Federated
Learning of Quantum Neural Networks}

\author{Marcel Mordarski\thanks{Equal contribution.
  Correspondence to \texttt{marcel.mordarski25@imperial.ac.uk}.} \quad
  Nathan Mani\footnotemark[1] \quad Arshad Patel\footnotemark[1] \quad
  William Knottenbelt \quad Roberto Bondesan \\
  Department of Computing, Imperial College London \\
  London SW7 2AZ, United Kingdom}

\renewenvironment{abstract}{%
  \vskip 0.075in%
  \centerline{\large\bf Abstract}%
  \vspace{0.5ex}%
}{%
  \par%
  \vskip 1ex%
}

\begin{document}

\maketitle

\begin{abstract}
Encrypted training relies on keeping server-side updates low-degree. This constraint traditionally excludes models whose weights inhabit a compact Lie group (notably variational quantum circuits, where every trainable weight is an $\SU(2)$ rotation). Expressed in Euler angles or discrete alphabets, these updates appear transcendental, historically demanding prohibitive costs: one client--server round per gate, or upwards of $25{,}000$ operations per weight. This penalty is strictly an artefact of coordinates. In the unit-quaternion (spin) chart, group composition is exactly bilinear (degree two, with coefficients in $\{-1,0,+1\}$). Consequently, encrypted rotation updates cost one multiplicative level and federated averaging costs zero in \emph{any} levelled homomorphic scheme, completely eliminating bootstrapping. This implementation-independent algebraic property is confirmed across two cryptographic backends, introducing only $0.0$ and $-2.0\times10^{-12}$\,rad of aggregation error. Leveraging this reduction yields a non-interactive protocol for encrypted federated training of hybrid quantum--classical networks. It includes correctness proofs for aggregation and sign handling, plus a compilation lemma proving parameterised entanglers add only constant-factor overhead without altering the depth class. Empirically, a paired five-seed study confirms zero measurable utility tax ($\Delta=+9\times10^{-6}$ MSE, $p=0.92$), and a noise-budget ablation falsifies the hypothesis that encryption noise regularises. These convergence trends replicate across datasets and scale to $20$ clients. Finally, hardware validation on a $156$-qubit processor achieves $0.9918$ fidelity against a $0.99957$ unencrypted control.
\end{abstract}

\begin{figure}[H]
    \internallinenumbers
    \centering
    \scalebox{\heroscale}{%
    \begin{tikzpicture}[font=\footnotesize, line cap=round, line join=round]

    \definecolor{yarnA}{RGB}{210,108, 60}      
    \definecolor{yarnB}{RGB}{ 56,128,128}      
    \definecolor{blanketBlend}{RGB}{132,118, 96}
    \definecolor{catGrey}{RGB}{ 95, 95, 95}
    \definecolor{figGrey}{RGB}{125,125,135}
    \definecolor{needleGrey}{RGB}{ 70, 70, 80}
    \definecolor{bubbleEdge}{RGB}{160,160,170}

    \def\R{0.66}                  
    \def\Ry{0.20}                  
    \def\sphereY{1.10}            
    \def\sphereLx{-2.95}          
    \def\sphereRx{ 2.95}          

    \begin{scope}[shift={(\sphereLx, \sphereY)}]
        \draw[gray!40, thin, dashed] (\R, 0) arc (0:180:\R cm and \Ry cm);
        \draw[gray!55, thin] (\R, 0) arc (0:-180:\R cm and \Ry cm);
        \draw[black!72, thick] (0,0) circle (\R cm);
        \draw[->, thin, black!58] (0,0) -- (0, {\R+0.16}) node[above, font=\tiny] {$\ket{0}$};
        \draw[->, thin, black!58] (0,0) -- (0, {-\R-0.16}) node[below, font=\tiny] {$\ket{1}$};
        \fill[yarnA!75!black] (0,  \R) circle (1.5pt);
        \fill[yarnA!75!black] (0, -\R) circle (1.5pt);
        \fill[yarnA!75!black] ( \R, 0) circle (1.5pt);
        \fill[yarnA!75!black] (-\R, 0) circle (1.5pt);
        \fill[yarnA!75!black] (0, -\Ry) circle (1.4pt);
        \fill[yarnA!75!black, opacity=0.4] (0, \Ry) circle (1.4pt);

        \node[font=\small\bfseries, text=yarnA!70!black, anchor=south] at (0, \R + 0.40) {Pauli-OTP};
        \node[font=\tiny\itshape, text=yarnA!55!black, anchor=south] at (0, \R + 0.68) {6 discrete keys};
    \end{scope}

    \begin{scope}[shift={(\sphereRx, \sphereY)}]
        \draw[gray!40, thin, dashed] (\R, 0) arc (0:180:\R cm and \Ry cm);
        \draw[gray!55, thin] (\R, 0) arc (0:-180:\R cm and \Ry cm);
        \draw[black!72, thick] (0,0) circle (\R cm);
        \draw[->, thin, black!58] (0,0) -- (0, {\R+0.16}) node[above, font=\tiny] {$\ket{0}$};
        \draw[->, thin, black!58] (0,0) -- (0, {-\R-0.16}) node[below, font=\tiny] {$\ket{1}$};
        \foreach \a in {0,30,...,330}{
            \pgfmathsetmacro{\px}{\R*cos(\a)}
            \pgfmathsetmacro{\py}{\Ry*sin(\a)}
            \fill[yarnB!75!black, opacity=0.72] (\px, \py) circle (0.95pt);
        }
        \foreach \a in {0,40,...,320}{
            \pgfmathsetmacro{\px}{\R*0.819*cos(\a)}
            \pgfmathsetmacro{\py}{\R*0.574 + \Ry*0.819*sin(\a)}
            \fill[yarnB!75!black, opacity=0.78] (\px, \py) circle (0.95pt);
        }
        \foreach \a in {0,40,...,320}{
            \pgfmathsetmacro{\px}{\R*0.819*cos(\a)}
            \pgfmathsetmacro{\py}{-\R*0.574 + \Ry*0.819*sin(\a)}
            \fill[yarnB!75!black, opacity=0.78] (\px, \py) circle (0.95pt);
        }
        \fill[yarnB!75!black] (0,  \R) circle (1.3pt);
        \fill[yarnB!75!black] (0, -\R) circle (1.3pt);

        \node[font=\small\bfseries, text=yarnB!55!black, anchor=south] at (0, \R + 0.40) {Quaternion-OTP};
        \node[font=\tiny\itshape, text=yarnB!45!black, anchor=south] at (0, \R + 0.68) {continuous $\mathrm{SU}(2)$ keys};
    \end{scope}

    \begin{scope}[shift={(0, 1.85)}]
        \foreach \cx/\cy/\rad in {-0.78/0.10/0.32, -0.42/0.30/0.30, 0.00/0.36/0.32, 0.42/0.30/0.30, 0.78/0.10/0.32, 0.55/-0.16/0.28, 0.00/-0.20/0.30, -0.55/-0.16/0.28}{
            \fill[white] (\cx,\cy) circle (\rad);
            \draw[bubbleEdge, line width=0.45pt] (\cx,\cy) circle (\rad);
        }
        \fill[white] (-0.78,-0.05) -- (-0.85, 0.10) -- (-0.55, 0.40) -- ( 0.00, 0.50) -- ( 0.55, 0.40) -- ( 0.85, 0.10) -- ( 0.78,-0.05) -- ( 0.55,-0.30) -- ( 0.00,-0.30) -- (-0.55,-0.30) -- cycle;
        \node[font=\scriptsize, text=yarnA!70!black] at (-0.42, 0.06) {$\mathit{CNOT}$};
        \node[font=\scriptsize, text=yarnA!70!black] at (-0.42,-0.18) {$H,\;S$};
        \node[font=\small\bfseries, text=black!55] at (0, -0.05) {$+$};
        \node[font=\scriptsize, text=yarnB!50!black] at ( 0.45, 0.06) {$R_n(\theta)$};
        \node[font=\scriptsize, text=yarnB!50!black] at ( 0.45,-0.18) {$T$-gate};
    \end{scope}

    \fill[white] (0, 1.32) circle (0.10); \draw[bubbleEdge, line width=0.4pt] (0, 1.32) circle (0.10);
    \fill[white] (0, 1.05) circle (0.07); \draw[bubbleEdge, line width=0.4pt] (0, 1.05) circle (0.07);
    \fill[white] (0, 0.88) circle (0.045); \draw[bubbleEdge, line width=0.4pt] (0, 0.88) circle (0.045);

    \fill[figGrey] (0, 0.55) circle (0.22);
    \fill[figGrey!70!black, opacity=0.35] (0, 0.50) ellipse [x radius=0.20, y radius=0.06];
    \fill[figGrey] (-0.55, 0.35) .. controls (-0.50, 0.10) and (-0.40,-0.10) .. (-0.42,-0.30) -- ( 0.42,-0.30) .. controls ( 0.40,-0.10) and ( 0.50, 0.10) .. ( 0.55, 0.35) -- cycle;
    \fill[figGrey!95!black] (-0.85,-0.30) .. controls (-0.95,-0.55) and ( 0.95,-0.55) .. ( 0.85,-0.30) -- ( 0.42,-0.30) -- (-0.42,-0.30) -- cycle;
    \fill[figGrey!90!black] (-0.85,-0.55) .. controls (-1.10,-0.50) and (-1.05,-0.30) .. (-0.85,-0.30) -- cycle;
    \fill[figGrey!90!black] ( 0.85,-0.55) .. controls ( 1.10,-0.50) and ( 1.05,-0.30) .. ( 0.85,-0.30) -- cycle;

    \begin{scope}
        \fill[blanketBlend, opacity=0.90] (-0.95,-0.35) -- ( 0.95,-0.35) -- ( 1.05,-0.55) -- ( 0.85,-0.62) -- ( 0.45,-0.55) -- ( 0.05,-0.63) -- (-0.40,-0.55) -- (-0.85,-0.62) -- (-1.05,-0.55) -- cycle;
        \foreach \x in {-0.85,-0.65,...,0.85}{ \draw[yarnA!75!black, line width=0.30pt, opacity=0.55] (\x,-0.38) -- (\x,-0.55); }
        \foreach \y in {-0.40,-0.46,-0.52}{ \draw[yarnB!70!black, line width=0.32pt, opacity=0.60] (-0.92,\y) -- ( 0.92,\y); }
        \foreach \x in {-0.85,-0.60,...,0.85}{ \draw[blanketBlend!60!black, line width=0.4pt, opacity=0.7] (\x,-0.55) .. controls (\x-0.03,-0.65) and (\x+0.03,-0.65) .. (\x+0.05,-0.55); }
    \end{scope}

    \draw[figGrey, line width=3.5pt, line cap=round] (-0.55, 0.30) -- (-0.62, 0.05) -- (-0.30,-0.18);
    \fill[figGrey!85!black] (-0.30,-0.18) circle (0.07);
    \draw[figGrey, line width=3.5pt, line cap=round] ( 0.55, 0.30) -- ( 0.62, 0.05) -- ( 0.30,-0.18);
    \fill[figGrey!85!black] ( 0.30,-0.18) circle (0.07);
    \draw[needleGrey, line width=0.8pt] (-0.30,-0.18) -- ( 0.18,-0.40);
    \fill[needleGrey] ( 0.18,-0.40) circle (0.030);
    \draw[needleGrey, line width=0.8pt] ( 0.30,-0.18) -- (-0.18,-0.40);
    \fill[needleGrey] (-0.18,-0.40) circle (0.030);

    \draw[yarnA, line width=1.1pt, decoration={coil, aspect=0, segment length=2.2mm, amplitude=0.36mm}, decorate] (\sphereLx + 0.10, \sphereY - \R - 0.10) .. controls (\sphereLx + 0.30, 1.40) and (-1.40, 0.70) .. (-0.30,-0.18);
    \draw[yarnB, line width=1.1pt, decoration={coil, aspect=0, segment length=2.2mm, amplitude=0.36mm}, decorate] (\sphereRx - 0.10, \sphereY - \R - 0.10) .. controls (\sphereRx - 0.30, 1.40) and ( 1.40, 0.70) .. ( 0.30,-0.18);

    \draw[blanketBlend!70!black, line width=0.7pt, decoration={coil, aspect=0, segment length=1.4mm, amplitude=0.25mm}, decorate] ( 0.90,-0.58) .. controls ( 1.30,-0.75) and ( 1.55,-0.85) .. ( 1.85,-0.95);
    \begin{scope}[shift={(2.20,-0.95)}, scale=0.46]
        \fill[catGrey] (-0.65, 0.05) .. controls (-0.78, 0.40) and (-0.30, 0.55) .. ( 0.10, 0.50) .. controls ( 0.45, 0.45) and ( 0.55, 0.20) .. ( 0.50, 0.05) .. controls ( 0.40,-0.02) and (-0.50,-0.02) .. (-0.65, 0.05) -- cycle;
        \draw[catGrey, line width=1.7pt, line cap=round] (-0.55, 0.30) .. controls (-0.95, 0.55) and (-1.10, 1.05) .. (-0.85, 1.30);
        \draw[catGrey, line width=2.2pt, line cap=round] ( 0.20, 0.20) -- (-0.10, 0.55);
        \fill[catGrey] (-0.10, 0.55) circle (0.06);
        \fill[catGrey] ( 0.10, 0.55) circle (0.21);
        \fill[white] ( 0.03, 0.56) circle (0.029); \fill[white] ( 0.15, 0.56) circle (0.029);
        \fill[black] ( 0.018, 0.560) circle (0.018); \fill[black] ( 0.138, 0.560) circle (0.018);
    \end{scope}

    \node[font=\small\itshape, align=center, text=black!75] at (0,-1.55) {\textbf{Each yarn falls short on its own; knitting them together makes the blanket fit.}};
    \end{tikzpicture}}
\end{figure}

\begin{figure}[H]
    \internallinenumbers
    \caption{\textbf{The hybrid encryption scheme as a half-finished blanket.} An exact, constant-depth compilation between disparate algebraic bases (both necessary for universal quantum computation) is proposed to encrypt federated learning. It resolves the continuous-rotation bottleneck without invoking prohibitive bootstrapping. The two yarns represent two encryption schemes for $\mathrm{SU}(2)$ used in compilation. The Pauli One-Time-Pad (orange), using discrete keys at the six Bloch lattice points, handles Clifford gates ($\mathit{CNOT}, H, S$) via a single XOR update but stalls on continuous rotations. Conversely, the Quaternion-OTP (teal), using continuous keys over the Bloch sphere, evaluates arbitrary rotations $R_n(\theta)$ via degree-2 polynomial arithmetic (Proposition~\ref{prop:degree2}) but cannot factorise across $\mathit{CNOT}$. By weaving both tracks, the protocol completes the circuit: continuous rotations are encrypted by quaternions and Cliffords by brief Pauli-track detours. The cat plays no formal role.}
    \label{fig:hero}
\end{figure}

\section{Introduction}
\label{sec:intro}

Encrypted federated learning provides a natural way to train shared models whilst protecting client privacy, and efficient implementations are rapidly emerging. Homomorphic FedAvg is practical because linear weighted averaging avoids multiplicative depth~\citep{zhang2020batchcrypt}; encrypted inference succeeds because low-degree polynomial activations eliminate the need for bootstrapping~\citep{gilad2016cryptonets} As a general principle, the reach of homomorphic encryption in ML is set by one question: \emph{is the server-side parameter update a low-degree polynomial in the parameters?} When this is the case, encryption largely amounts to bookkeeping; when it is not, general-purpose methods introduce cumbersome computational overhead.

Undermining generality, that question has been answered architecture by architecture only, and the field now
treats low-degree representability as a property of \emph{layer types}: linear
yes, ReLU no-but-approximable, attention expensive. That premise becomes harder to sustain for a growing class of models because their parameters are not free vectors in
$\mathbb{R}^d$ at all. They are elements of a compact Lie group: orthogonal and
unitary recurrent networks~\citep{arjovsky2016unitary,lezcano2019cheap},
rotation-equivariant architectures~\citep{thomas2018tensor}, quaternion-valued
networks~\citep{parcollet2020survey} and -- where the group structure is a consequence of the underlying physics -- variational quantum
circuits. Their every trainable weight is a rotation in
$\SU(2)$~\citep{benedetti2019parameterized,cerezo2021variational}. 

The final complication for group-valued setting is that the degree is coordinate-dependent: the same update may be transcendental in one chart and bilinear in another. An architecture-first analysis misses this distinction, asking about efficiency in the wrong coordinate system. For example, the Pauli
one-time pad~\citep{Broadbent_2015}, the standard encryption for quantum states,
is closed under the discrete Clifford group but not under continuous rotations:
applying $R(\theta)$ to a padded state leaves a phase error conditioned on the
secret key. 

Both accepted repairs for this phase error are expensive. Measurement-based blind
delegation~\citep{Broadbent2009Universal,fitzsimons2016privatequantumcomputationintroduction}
restores security at one client--server round \emph{per gate}, $O(T)$ in circuit
depth $T$; Solovay--Kitaev compilation~\citep{Dawson2005SK} restores it by
replacing each continuous weight with a discrete-alphabet sequence, typically
above $25{,}000$ gates per rotation~\citep{Ma_2022} where each non-Clifford element demands an active key repair.

Ultimately, these cryptographic penalties translate directly into bottlenecks. An entire model family is excluded from the encrypted federated stack the rest of ML already uses: privacy-preserving quantum federated learning is built either on plaintext updates, which gradient-inversion attacks make indefensible~\citep{zhu2019deep,geiping2020inverting}, or on interactive delegation, which does not compose with classical secure aggregation~\citep{bonawitz2017practical}. The remaining remedies force trade-offs across vulnerable dimensions -- utility for differential privacy in small-data regimes~\citep{abadi2016deep,dwork2014algorithmic}, communication for multi-party computation where round-trips bind~\citep{kairouz2021advances}. Moreover, the overheads are large enough that the protocols are never run end-to-end, leaving a literature of cost models without measurements. Ultimately, the most profound cost is conceptual: so long as encryptability is framed purely as an architectural constraint, the field remains blind to alternative solutions.

This paper deems the solution to be algebraic, as in Fig.~\ref{fig:hero}. $\SU(2)$ is isomorphic to the unit quaternions, and under that identification
group composition is quaternion multiplication, which is \emph{bilinear}:
composition is a polynomial map of total degree exactly two on $\mathbb{R}^4$
with coefficients in $\{-1,0,+1\}$ (Proposition~\ref{prop:degree2}). This shift in coordinate system perfectly satisfies the low-degree requirement for the sought efficient homomorphic encryption. Three
consequences form the backbone of this work: an encrypted rotation update consumes one multiplicative level; encrypted weighted averaging (the FedAvg primitive) consumes zero; and critically, neither needs bootstrapping. By bypassing the need for thousands of discrete gates or active key repairs, the encrypted quantum training round is \emph{shallower} in multiplicative depth than the encrypted classical MLP round it is usually compared against. Two observations make this more than a mathematical trick. The proposed encryption is scheme-agnostic, since degree two is a property of the algebra rather than of CKKS~\citep{cheon2017ckks} (verified by running the same aggregation on two independent backends in \S\ref{sec:exp-exact}). Also, the encryption generalises, since the same bilinearity holds for rotors in the even Clifford subalgebra, meaning any $\Spin(n)$-parameterised layer inherits the bound (Corollary~\ref{cor:spin}).

\paragraph{Contributions.}
\begin{itemize}\setlength{\itemsep}{0pt}
\item \textbf{An algebraic depth ledger for group-valued parameters}
  (\S\ref{sec:reduction}): Since $\SU(2)$ composition is degree-2 in spin coordinates, an encrypted rotation update consumes just one multiplicative level, and encrypted FedAvg consumes zero in \emph{any} levelled scheme (also in  $\Spin(n)$ rotors).
\item \textbf{A practical protocol and empirical evaluation}
  (\S\ref{sec:protocol}--\S\ref{sec:experiments}): Communication overhead is reduced to a single exchange per round (down from $37$--$73$). Furthermore, a paired experimental design with a matched unencrypted hardware control demonstrates no measurable utility tax, falsifying confounding regularisation effects that a single-seed run would obscure. This yields the first known public implementation of continuous-rotation quantum homomorphic encryption, validated across multiple datasets and scaled to $20$ clients.
\item \textbf{Correctness of the encrypted federated loop}
  (\S\ref{sec:correctness}--\S\ref{sec:geometry}): Aggregation correctness is established via a mask-placement lemma that separates shared from fresh randomness, alongside the derivation of the exact analytic discrepancy,
  $-\tfrac{1}{24}\sum_k w_k(\theta_k-\bar\theta)^3+O(\delta^5)$, between
  quaternion and Euclidean averaging (validated empirically to $\geqslant0.9991$).
\item \textbf{Constant-overhead entangler compilation}
  (\S\ref{sec:entanglers}): Parameterised two-qubit gates compile exactly onto
  CNOTs and single-qubit rotations, preserving the multiplicative depth class. This is verified end-to-end
  at $8.3\times10^{-3}$\,rad against a $1.25\times10^{-2}$\,rad shot-noise floor.
\end{itemize}

\section{Setting: leakage}
\label{sec:background}
To support the algebraic reduction detailed in \S\ref{sec:reduction}, the relevant quantum gates necessary for universal computation are discussed (more in the primer in Appendix~\ref{app:preliminaries}). A universal gateset splits into two categories: the discrete Clifford group (the normaliser of the Pauli group, containing gates such as $H$, $S$, and CNOT) and continuous non-Clifford operations. A hybrid quantum--classical model interleaves a parameterised quantum circuit with small classical layers, where the trainable quantum parameters are exclusively these continuous, non-Clifford single-qubit rotation angles (elements of $\SU(2)$). Gradients for these weights are evaluated via the parameter-shift rule~\citep{mitarai2018quantum,schuld2019evaluating} or simulator backpropagation. While Clifford gates are readily managed by standard quantum encryption, continuous generic rotations are not, creating the algebraic bottleneck that the quaternion encryption resolves. 

Federated training of such a model has \emph{two} distinct exposure points. \textbf{(i)~Delegated execution:} quantum hardware is remote, so a client must submit its circuit, and the rotation angles -- the weights -- cross the trust boundary before any aggregation. \textbf{(ii)~Federated aggregation:} the server that averages updates sees them, and gradient-inversion attacks reconstruct training data from exactly such transmissions~\citep{zhu2019deep,geiping2020inverting}. Both are targeted. 

An honest-but-curious server and peers and an active network adversary
on authenticated channels are assumed. The cloud QPU sees only quaternion-masked states; the
server sees ciphertexts, message sizes and the public weights $n_k/n$, holding an
evaluation-only context in which decryption raises an error; peers see the
decrypted global aggregate and nothing further by protocol
(Table~\ref{tab:threat}, Appendix~\ref{app:security}). Confidentiality of the
classical layer rests on the hardness of
(Ring-)LWE~\citep{regev2009lattices,lyubashevsky2010ideal}; the quantum layer's
rests on physics -- averaged over the pad key, the
delegated state is maximally mixed, so the guarantee is information-theoretic
and immune to unbounded compute, including a future large quantum
computer~\citep{shor1997polynomial}.

\section{Related work}
\label{sec:related}

\paragraph{Low-degree representations for encrypted ML.}
The design pattern is well established classically: CryptoNets replaces activations with low-degree polynomials to enable levelled schemes~\citep{gilad2016cryptonets}, homomorphic FedAvg exploits the linearity of averaging~\citep{zhang2020batchcrypt,cheon2017ckks}, secure aggregation relies on masking~\citep{bonawitz2017practical,shamir1979share}, and differential privacy trades utility for security~\citep{abadi2016deep,dwork2014algorithmic}. The pattern is applied here one level deeper (to the \emph{parameter group} rather than the nonlinearity), exploiting a group law that is intrinsically low-degree and requires no approximation.

\paragraph{Group-parameterised models.}
Unitary and orthogonal recurrent networks~\citep{arjovsky2016unitary, lezcano2019cheap}, rotation-equivariant architectures~\citep{thomas2018tensor}, and quaternion networks~\citep{parcollet2020survey} constrain weights to a compact group for optimisation or inductive-bias reasons. Corollary~\ref{cor:spin} demonstrates this is incidentally an encryption-friendly choice, since any $\Spin(n)$-parameterised layer inherits the depth-one bound (previously unexplored).

\paragraph{Quantum homomorphic encryption and quantum FL.}
Blind quantum computing provides information-theoretic delegation at $O(T)$ interaction~\citep{Broadbent2009Universal,fitzsimons2016privatequantumcomputationintroduction}. Pauli-OTP is non-interactive on Cliffords but stalls on continuous rotations~\citep{Broadbent_2015}, whilst Mahadev's encrypted-CNOT enables a fully classical client under LWE~\citep{mahadev2023}. Closest to the present approach, \citet{Ma_2022} introduce the quaternion representation for single-client delegated computation. The federated setting, however, shifts the requirement from isolated delegation to a continuous training loop. This is resolved here by formalising the depth ledger for that loop, establishing the correctness and geometry of aggregating group-valued updates, detailing the entangler compilation, and providing a hardware-validated implementation. Conversely, existing quantum FL frameworks either exchange plaintext updates~\citep{chen2021federated,chehimi2022quantum,Ballester2025} or rely on interactive delegation that cannot compose with classical secure aggregation~\citep{Li2025Quantum,Dutta2024ParadigmShift}; robustness to malicious clients is treated separately~\citep{ElMaouaki2025RobQFL}.

\section{Encryptability is a coordinate choice}
\label{sec:reduction}

\subsection{The degree-2 reduction and the depth ledger}
\label{sec:degree2-main}
To efficiently encrypt the continuous non-Clifford $\SU(2)$ rotations identified in \S\ref{sec:background}, a coordinate representation $q$ evaluating as a low-degree polynomial is needed. Let $\HH_1=\{q\in\mathbb{R}^4:\|q\|_2=1\}$ be the unit quaternions with the Hamilton product
\begin{equation}
(q\otimes q')_i \;=\; \sum_{j,k} M^{(i)}_{jk}\, q_j\, q'_k,
\qquad M^{(i)}_{jk}\in\{-1,0,+1\},\quad i,j,k\in\{0,1,2,3\},
\label{eq:quat-mul}
\end{equation}
and let $\Phi:\HH_1\to\SU(2)$ be the standard double cover, so that a rotation
$U(\theta,\hat u)=\cos(\theta/2)I - i\sin(\theta/2)(u_xX+u_yY+u_zZ)$ is encoded
by
\begin{equation}
q(\theta,\hat u) = \bigl(\cos\tfrac{\theta}{2},\,
-u_x\sin\tfrac{\theta}{2},\, -u_y\sin\tfrac{\theta}{2},\,
-u_z\sin\tfrac{\theta}{2}\bigr), \qquad \|q\|_2 = 1 .
\label{eq:quat-iso-app}
\end{equation}
Equation~\ref{eq:quat-mul} is the whole argument: the group law is
\emph{bilinear}, hence of total degree two, hence within reach of a levelled
homomorphic scheme at a single multiplication.

\begin{proposition}[Depth ledger for encrypted rotation update]
\label{prop:degree2}
Let $\mathcal{E}$ be any levelled homomorphic scheme supporting ciphertext
addition at depth $0$, plaintext--ciphertext multiplication at depth $1$, and
$\mathbb{R}$-linear combination with plaintext scalars at depth $0$. Let
$q\in\HH_1$ encode a rotation $\Phi(q)$ and let $V\in\SU(2)$ have quaternion
representation $r\in\HH_1$. Then \emph{(i)} $q'=q\otimes r$, representing
$\Phi(q)\Phi(r)$, is a polynomial of total degree two in $(q,r)$ with
coefficients in $\{-1,0,+1\}$; \emph{(ii)} for plaintext $r$ and encrypted $q$,
$\Enc(q')$ is computable from $\Enc(q)$ and $r$ by plaintext--ciphertext
multiplications and constant-coefficient additions alone, consuming exactly one
multiplicative level; and \emph{(iii)} the weighted mean
$\bar q=\sum_{k=1}^{K}\tfrac{n_k}{n}q^{(k)}$ of $K$ ciphertexts is computable at
depth $0$, with the renormalisation $\bar q/\|\bar q\|$ performed client-side in
plaintext after decryption and hence also at depth $0$. Consequently a federated
round of one server-side rotation update per encrypted key followed by one
weighted aggregation fits a single multiplicative level, and no
bootstrapping is required in the inner loop.
\end{proposition}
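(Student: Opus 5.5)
The plan is to prove the three parts in order, each by direct algebra, and then assemble the depth count for a federated round.

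For \emph{(i)}, I would write out the Hamilton product explicitly:
\[
(q\otimes r)_0=q_0r_0-q_1r_1-q_2r_2-q_3r_3,\qquad (q\otimes r)_1=q_0r_1+q_1r_0+q_2r_3-q_3r_2,
\]
with the two remaining components obtained by cyclic permutation of the indices $1,2,3$. Reading off the coefficients gives the tensors $M^{(i)}_{jk}\in\{-1,0,+1\}$ of \eqref{eq:quat-mul}. Each monomial is $q_jr_k$, so every component is bilinear and hence homogeneous of total degree exactly two. The degree is not lower, since for example $q_0r_0$ appears with coefficient $+1$. To confirm that $q\otimes r$ represents $\Phi(q)\Phi(r)$, I would invoke that $\Phi$ is a group homomorphism. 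Concretely, this amounts to checking that $\Phi$ sends $1,\mathbf{i},\mathbf{j},\mathbf{k}$ to $I,-iX,-iY,-iZ$, consistently with \eqref{eq:quat-iso-app}. These images satisfy $(-iX)(-iY)=-XY=-iZ$, which matches $\mathbf{ij}=\mathbf{k}$, and their squares all equal $-I$. Bilinearity of both products then gives the claim on all of $\HH_1$. Finally, $\|q\otimes r\|=\|q\|\|r\|=1$, so the output stays in $\HH_1$.

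For \emph{(ii)}, I would fix $r$ in plaintext. Then $q'_i=\sum_j\bigl(\sum_k M^{(i)}_{jk}r_k\bigr)q_j$ is \emph{linear} in $q$, with plaintext coefficients $c_{ij}=\sum_k M^{(i)}_{jk}r_k$. The homomorphic evaluation proceeds as follows. First, form $\Enc(q_j)\cdot r_k$ by plaintext--ciphertext multiplication; these sixteen products all sit at depth $1$, and none is fed into a further multiplication. Second, combine them with signs $\pm1$ via ciphertext additions and scalar negations, both at depth $0$. The depth is the maximum along any computation path, which is $1$. It is also exactly $1$ rather than $0$, because generic real $r_k$ requires a genuine scalar multiplication and not merely a sign. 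An alternative is to precompute $c_{ij}$ in plaintext and use one plaintext multiplication per term, which gives the same depth. If the scheme packs $q$ into a single ciphertext, rotations or slot permutations would be needed; I would note that these are depth-$0$ in standard schemes, or avoid the issue by encrypting the four components separately.

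For \emph{(iii)}, the weights $n_k/n$ are public plaintext scalars. Under the hypothesis that $\mathbb{R}$-linear combination with plaintext scalars costs depth $0$, the mean $\sum_k (n_k/n)\Enc(q^{(k)})$ is depth $0$. Renormalisation is nonpolynomial, involving a square root and a division, but it is done after decryption on the client, so it consumes no ciphertext level. I would add one remark: $\bar q\neq0$ is required, which holds when the client quaternions lie in a common hemisphere, for instance after the sign handling treated later in the paper.

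For the consequence, I would compose the two steps. The rotation update yields ciphertexts at level $1$. Aggregation consists of additions and plaintext-scalar combinations, so the level stays at $1$, and a depth-$1$ budget therefore suffices with no bootstrapping. The main subtlety is the apparent conflict between scalar multiplication by $n_k/n$ at depth $0$ and plaintext multiplication at depth $1$ in schemes like CKKS. The proposition sidesteps this by stating it as an assumption on $\mathcal{E}$. Without that assumption, the total depth would be at most $2$, or one could fold $n_k/n$ into $r$ on the client side.
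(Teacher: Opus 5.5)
Your argument follows the paper's proof step for step: (i) is read off the Hamilton table; (ii) fixes the plaintext $r$ so that each component of $q\otimes r$ is a linear form in the $\Enc(q_j)$ with plaintext coefficients; (iii) treats the FedAvg sum as a plaintext-scalar combination, with renormalisation after decryption; and the round bound is the composition. That depth bookkeeping is fine. The tension you flag at the end also undercuts your ``exactly one rather than zero'' argument, because the linear form in (ii) is itself an $\mathbb{R}$-linear combination with plaintext scalars. The consistent reading, which the paper adopts in Eq.~\ref{eq:agg}, charges a level only for non-integer plaintext multiplication. Aggregation then uses the integer weights $n_k$ (repeated addition), and clients divide by $n$ after decryption.

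The step that fails is the one you add beyond the paper: the check that $q\otimes r$ represents $\Phi(q)\Phi(r)$. Your map $1,\mathbf{i},\mathbf{j},\mathbf{k}\mapsto I,-iX,-iY,-iZ$ is a homomorphism, but it is not consistent with Eq.~\ref{eq:quat-iso-app}. It sends $q(\theta,\hat u)$ to $\cos\tfrac{\theta}{2}I+i\sin\tfrac{\theta}{2}(u_xX+u_yY+u_zZ)=U(\theta,\hat u)^{-1}$. The map forced by the encoding is $\Phi(q)=q_0I+i(q_1X+q_2Y+q_3Z)$. Since $(iX)(iY)=-XY=-iZ=\Phi(\mathbf{j}\mathbf{i})$, this map is an anti-homomorphism: $\Phi(q)\Phi(r)=\Phi(r\otimes q)$. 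So under the paper's stated conventions, $q\otimes r$ represents $\Phi(r)\Phi(q)=V\Phi(q)$, not $\Phi(q)V$. A correct clause (i) needs either the factors reversed or the sign in Eq.~\ref{eq:quat-iso-app} flipped. The paper's proof never verifies this clause, so the mismatch sits in the statement itself. It is harmless for everything else: $r\otimes q$ is bilinear with the same coefficient set, so the counts in (ii), (iii) and the round bound go through unchanged. Still, you should have caught the discrepancy rather than asserting consistency.
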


Appendix~\ref{sec:degree2} gives the proof and
Corollary~\ref{cor:fedavg-depth} the counts on unbounded round. Two remarks elucidate the value of these statement to a broader non-quantum community.

\begin{corollary}[Spin generalisation]
\label{cor:spin}
The spin group $\Spin(n)$ is the double cover of the $n$-dimensional rotation group $\mathrm{SO}(n)$, generalising the relationship between unit quaternions and $3$D rotations. Let $\Spin(n)$ act by rotors in the even subalgebra
$\mathrm{Cl}^{[0]}(n)$~\citep{lounesto2001clifford}. Because the geometric
product is bilinear, composition of rotors is a degree-2 polynomial map on the
$2^{n-1}$ even-graded coordinates, so every statement of
Proposition~\ref{prop:degree2} holds verbatim for $\Spin(n)$-parameterised
layers, with $\HH_1 \cong \Spin(3)$ being the exact case $n=3$.
\end{corollary}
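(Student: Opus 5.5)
The plan is to reduce Corollary~\ref{cor:spin} to the same three observations that drive Proposition~\ref{prop:degree2}, after replacing the Hamilton product by the geometric product of $\mathrm{Cl}(n)$. First I fix coordinates. Take an orthonormal basis $e_1,\dots,e_n$ of $\mathbb{R}^n$ with $e_i^2=+1$ and $e_ie_j=-e_je_i$ for $i\neq j$ (the sign convention is immaterial). The basis blades $e_S=e_{s_1}\cdots e_{s_m}$, for $S=\{s_1<\dots<s_m\}\subseteq\{1,\dots,n\}$, span $\mathrm{Cl}(n)$. The even subalgebra $\mathrm{Cl}^{[0]}(n)$ is spanned by the blades with $|S|$ even, which gives $2^{n-1}$ coordinates.

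The key computation is that blades multiply monomially:
\begin{equation*}
e_Se_T=\epsilon(S,T)\,e_{S\triangle T},\qquad \epsilon(S,T)\in\{-1,+1\}.
\end{equation*}
The sign comes from counting the transpositions needed to sort the concatenated word and from the squares $e_i^2=1$. Since $|S\triangle T|\equiv|S|+|T|\pmod 2$, the product of two even blades is again even. Hence for rotors $a=\sum_S a_Se_S$ and $b=\sum_T b_Te_T$ we get
\begin{equation*}
(ab)_U=\sum_{S\triangle T=U}\epsilon(S,T)\,a_Sb_T .
\end{equation*}
This is a bilinear form whose structure constants lie in $\{-1,0,+1\}$, which is the exact analogue of \eqref{eq:quat-mul}. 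Composition in $\Spin(n)\subset\mathrm{Cl}^{[0]}(n)$ is the restriction of this product, and it corresponds under the double cover to composition in $\mathrm{SO}(n)$. This establishes the analogue of part (i).

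Parts (ii) and (iii), and the final consequence, then transfer verbatim, because the proof of Proposition~\ref{prop:degree2} uses only two facts: each output coordinate is a fixed $\{-1,0,+1\}$-combination of products $a_Sb_T$, and averaging is linear. For (ii), take $b$ in plaintext and $a$ encrypted coordinatewise. Each output coordinate is $\sum_{S}\epsilon\,b_{T(S)}\Enc(a_S)$, which uses one plaintext--ciphertext multiplication per term followed by additions, so it consumes one level. For (iii), the weighted mean $\sum_k\tfrac{n_k}{n}a^{(k)}$ is a plaintext-scalar linear combination and therefore has depth $0$. The projection back to the group is then done client-side after decryption. Together these give the single-level federated round.

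I expect the main obstacle to be this client-side projection step, and I would handle it first. For $n=3$, dividing by the norm suffices, because $\Spin(3)=\HH_1$ is exactly the unit sphere in $\mathrm{Cl}^{[0]}(3)\cong\HH$. For $n\ge4$, $\Spin(n)$ is a proper submanifold of the unit sphere in $\mathbb{R}^{2^{n-1}}$: a generic even element of unit norm need not satisfy $a\tilde a=1$ or preserve vectors under the sandwich action. I would therefore read ``verbatim'' as concerning the server-side depth accounting. The renormalisation $\bar q/\|\bar q\|$ would be replaced by some plaintext retraction onto $\Spin(n)$, for instance a polar-type projection or a rotor logarithm followed by the exponential. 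Any such map is computed after decryption and so costs depth $0$ regardless of its form. Finally, checking $n=3$ by identifying $e_2e_3,e_3e_1,e_1e_2$ with $i,j,k$ (up to sign) recovers the Hamilton product \eqref{eq:quat-mul} and shows $\HH_1\cong\Spin(3)$.
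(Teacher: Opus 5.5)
Your proposal is correct and follows the paper's proof. Both expand rotors in the even blade basis, observe that the geometric product has structure constants $c^{C}_{AB}\in\{-1,0,+1\}$ (your $e_Se_T=\epsilon(S,T)\,e_{S\triangle T}$ makes this explicit), rerun the argument of Proposition~\ref{prop:degree2} over a larger index set, and specialise to $\HH$ at $n=3$. Your extra caveat is also right, and it is something the paper's ``applies unchanged'' passes over: for $n\geqslant4$, $\Spin(n)$ is a proper submanifold of the unit sphere in $\mathrm{Cl}^{[0]}(n)$. For example, the normalised mean of rotors in the $e_1e_2$ and $e_3e_4$ planes gives $R\tilde R$ a nonzero $e_1e_2e_3e_4$ component. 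So renormalisation must be replaced by a genuine plaintext retraction onto $\Spin(n)$; this keeps the depth-$0$ accounting intact, but it means ``verbatim'' holds only for the server-side depth ledger.
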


The per-round accounting is as follows. The rotation update consumes one multiplicative level and requires no interaction; weighted FedAvg and the Clifford key update incur zero multiplicative cost. Meanwhile, every branch, sign, renormalisation, and recovery step remains entirely free by running client-side in plaintext. Because clients train locally in plaintext and encrypt only the resulting angles, the per-round homomorphic work reduces to a single depth-0 aggregation, \emph{whatever the ansatz contains}. Furthermore, parameter-shift gradients scale only the local plaintext quantum work, leaving the encrypted workload entirely unaffected.

\subsection{Protocol}
\label{sec:protocol}

Figure~\ref{fig:protocol} shows one federated round; the numbered walk-through
below is the definition, and every symbol in the figure is defined here rather
than in its caption.

\begin{figure}[H]
    \centering
    \resizebox{\textwidth}{!}{%
    \begin{tikzpicture}[
        >={Latex[length=2mm]},
        font=\footnotesize, line cap=round, line join=round,
        client/.style={draw, rounded corners, fill=blue!7,
                       minimum width=1.55cm, minimum height=0.55cm,
                       align=center, font=\scriptsize},
        qnnbox/.style={draw, rounded corners, fill=teal!10,
                       minimum width=1.7cm, minimum height=0.75cm,
                       align=center, font=\scriptsize},
        encbox/.style={draw, rounded corners, fill=orange!18,
                       minimum width=2.0cm, minimum height=0.95cm,
                       align=center, font=\scriptsize},
        srvbox/.style={draw, rounded corners, fill=red!8,
                       minimum width=3.5cm, minimum height=1.55cm,
                       align=center, font=\scriptsize},
        cnotbox/.style={draw, rounded corners, fill=orange!12,
                       minimum width=3.5cm, minimum height=0.85cm,
                       align=center, font=\scriptsize, dashed},
        flow/.style={->, gray!70, thick},
        encarrow/.style={->, orange!75!black, thick},
        retarrow/.style={->, red!60!black, thick},
        cnotarrow/.style={<->, orange!65!black, thick, dashed,
                         shorten <=1pt, shorten >=1pt}
    ]
    \foreach \i/\y in {1/1.6, 2/0.0, 3/-1.6}{
        \node[client] (c\i) at (0, \y)
            {Client~\i \\[-1pt] {\tiny private $\mathcal{D}_\i$}};
    }
    \node[qnnbox] (qnn) at (3.0, 0)
        {local training \\[-1pt] (plaintext) \\[-1pt]
         {\tiny angles $\theta^{(k)}$}};
    \node[encbox] (enc) at (6.2, 0)
        {encode + encrypt \\[1pt]
         {\tiny $q(\theta)=\biggl(\cos\tfrac{\theta}{2},-\hat u\sin\tfrac{\theta}{2}\biggr)$} \\[1pt]
         {\tiny shared client keys}};
    \node[srvbox] (srv) at (10.8, 0)
        {\textbf{blind server} \\[-1pt]
         {\tiny (honest-but-curious; cannot decrypt)} \\[3pt]
         {\scriptsize depth-1 rotation update:} \\[-1pt]
         {\footnotesize $\Enc(q_{\text{new}}) = \Enc(q)\cdot V^{-1}$} \\[2pt]
         {\scriptsize depth-0 homomorphic FedAvg:} \\[-1pt]
         {\footnotesize $\Enc(q_{\text{global}}) =
            \displaystyle\sum_{k=1}^{K}\!\tfrac{n_k}{n}\,\Enc\bigl(q^{(k)}\bigr)$}};
    \node[cnotbox] (cnot) at (10.8, -2.20)
        {{\tiny\textsc{delegated-execution track only}} \\[-1pt]
         {\footnotesize quaternion $\,\to\,$ Pauli-OTP $\,\to\,$ CNOT
                        $\,\to\,$ quaternion}};
    \draw[cnotarrow] (srv.south) -- (cnot.north);
    \node[font=\tiny\itshape, text=orange!55!black, anchor=west]
        at ($(srv.south)!0.5!(cnot.north) + (0.18, 0)$)
        {continuous parameters never leave the quaternion track};
    \foreach \i in {1,2,3}{ \draw[flow] (c\i.east) -- (qnn.west); }
    \draw[flow] (qnn.east) -- (enc.west);
    \draw[encarrow] (enc.east) --
        node[above, font=\tiny] {$\Enc\bigl(q^{(k)}\bigr)$} (srv.west);
    \draw[retarrow]
        (srv.south west) .. controls (7.0,-3.30) and (2.0,-3.30) ..
        node[below, font=\tiny, sloped, pos=0.55]
            {$\Enc(q_{\text{global}})$ \; (one exchange per round)}
        (c3.south);
    \begin{scope}[on background layer]
        \fill[teal!4, rounded corners=4pt] (-1.05,-1.05) rectangle (12.70, 1.05);
        \node[font=\tiny\itshape, text=teal!50!black, anchor=west]
            at (-1.0, 0.85) {quaternion track (depth $\leq 1$)};
    \end{scope}
    \end{tikzpicture}}
    \caption{One federated round. Clients train in the clear, encode each
    trainable angle as a unit quaternion (Eq.~\ref{eq:quat-iso-app}) and encrypt
    its four components under keys shared among clients. The server applies
    plaintext rotations at one multiplicative level and the weighted FedAvg sum
    at zero, never decrypting; clients decrypt, renormalise and recover angles by
    $\mathrm{atan2}$. The Pauli-OTP detour is entered only on the delegated
    execution track (Lemma~\ref{lem:entangler}).}
    \label{fig:protocol}
\end{figure}

\begin{mdframed}[userdefinedwidth=0.96\textwidth, align=center]
\footnotesize
\textbf{Protocol (one federated round $t$).}
\begin{enumerate}\setlength{\itemsep}{0pt}
\item \textbf{Local training} (client, plaintext). Train on private
  $\mathcal{D}_k$ to obtain angles $\theta^{(k)}$; circuit evaluation is local,
  the delegated track of \S\ref{sec:entanglers} is used only when offloading
  to a remote QPU.
\item \textbf{Fix branch and sign, encode, encrypt} (client, depth $0$). Wrap to
  $(-\pi,\pi]$, align the hemisphere representative with the previous global
  quaternion (\S\ref{sec:geometry}), form $q\bigl(\theta^{(k)}\bigr)$ by
  Eq.~\ref{eq:quat-iso-app}, encrypt its four components under the clients'
  shared keys, upload.
\item \textbf{Aggregate} (server, depth $0$). Compute
  $\Enc(q_{\text{global}})=\sum_k\tfrac{n_k}{n}\Enc\bigl(q^{(k)}\bigr)$ by ciphertext
  addition and plaintext scaling, optionally applying a plaintext rotation
  $V^{-1}$ at depth $1$. The server holds an evaluation-only context and cannot
  decrypt.
\item \textbf{Decrypt and recover} (client, plaintext, depth $0$). Renormalise,
  recover $\hat\theta=2\,\mathrm{atan2}(-t_1,t_0)$, re-wrap, begin round $t{+}1$.
\end{enumerate}
Exactly one sequential exchange occurs per round, independent of circuit depth,
ansatz, and the number of parameter-shift evaluations used locally.
\end{mdframed}

The compatibility of the two tracks is a closure argument. Pauli-OTP is closed under Clifford conjugation but not under
continuous rotation, which is precisely the classical obstruction; the quaternion
track removes this problem because composition is degree 2; the two meet only at frame
conversions, a constant number per entangler, never sharing arithmetic on the
payload; and FedAvg, being linear, commutes with the homomorphic layer at zero
depth. Appendices~\ref{app:qotp} and~\ref{app:encrypted-qnn-training} give the
conversion machinery and the full construction.

\subsection{Correctness of the encrypted aggregation}
\label{sec:correctness}

The correctness of the reduction can be established by considering aggregation and mask-sharing.

\paragraph{Aggregation.}
The object summed by the server is the \emph{unmasked} update quaternion,
encrypted under homomorphic-layer keys shared by the clients: client $k$ encodes
each trainable angle by Eq.~\ref{eq:quat-iso-app} and encrypts the four
components, the server forms $\sum_k n_k\Enc\bigl(q^{(k)}\bigr)$ using ciphertext additions
and plaintext-integer weightings alone -- no ciphertext--ciphertext
multiplication, no decryption -- and clients decrypt, divide by $n$, renormalise
and recover $\hat\theta = 2\,\mathrm{atan2}(-t_1,t_0)$. Correctness is inherited
from linearity of the homomorphic layer exactly as in classical
HE-FedAvg~\citep{zhang2020batchcrypt}. Quaternion one-time-pad keys are a
distinct object living only on the QPU delegation channel; they are never
averaged (Appendix~\ref{app:correctness}).

\paragraph{Mask sharing.}
The following divides the labour, while Appendix~\ref{app:correctness} gives the proof and the corresponding statement
for the delegation channel.

\begin{lemma}[Mask placement]
\label{lem:mask}
Let $L_u(q)=u\otimes q$ be left multiplication by a unit quaternion $u$.
(i)~A \emph{common} left mask commutes with weighted averaging,
$\sum_k w_k L_u(q_k) = L_u\bigl(\sum_k w_k q_k\bigr)$, so it is stripped after decryption at zero depth. (ii)~A \emph{quantum-channel} mask must instead
be per-client and freshly sampled: a shared unitary $U$ preserves overlaps,
$\mathrm{Tr}[(U\rho_iU^\dagger)(U\rho_jU^\dagger)]=\mathrm{Tr}(\rho_i\rho_j)$,
leaking the pairwise geometry of client states, whereas independent Pauli keys
twirl any pair to $\tfrac{I}{d}\otimes\tfrac{I}{d}$.
\end{lemma}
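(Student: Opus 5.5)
Part (i) follows from bilinearity of the Hamilton product in Eq.~\ref{eq:quat-mul}. For fixed $u$, the map $L_u(q)=u\otimes q$ has components $(L_u q)_i=\sum_{j,k}M^{(i)}_{jk}u_j q_k$, so $L_u$ is an $\mathbb{R}$-linear map on $\mathbb{R}^4$, given by a $4\times 4$ matrix with entries $\pm u_j$. Linearity gives
\[
\textstyle\sum_k w_k\,L_u(q_k)=L_u\bigl(\sum_k w_k q_k\bigr)
\]
for arbitrary real weights, in particular $w_k=n_k/n$. I would then note two further facts:
\begin{itemize}
\item Since $\|u\|=1$, the map $L_u$ is orthogonal, with inverse $L_{u^{-1}}=L_{\bar u}$. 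After decryption the client strips the mask by $\bar u\otimes(\cdot)$, a plaintext operation.
\item Orthogonality also gives $\|L_u\bar q\|=\|\bar q\|$, so the renormalisation of Proposition~\ref{prop:degree2}(iii) commutes with unmasking.
\end{itemize}
Because both the stripping and the renormalisation happen client-side in plaintext, they cost depth $0$. The server-side sum is the depth-$0$ weighted mean of Proposition~\ref{prop:degree2}(iii), applied to masked ciphertexts, so no extra level is consumed.

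For part (ii) I would prove two claims separately.

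\textbf{Overlap preservation.} This follows from cyclicity of the trace and unitarity:
\[
\mathrm{Tr}\bigl[U\rho_iU^\dagger U\rho_jU^\dagger\bigr]=\mathrm{Tr}\bigl[U\rho_i\rho_jU^\dagger\bigr]=\mathrm{Tr}(\rho_i\rho_j).
\]
The server holds many copies, or can run a swap test on pairs of delegated states. It can therefore estimate every $\mathrm{Tr}(\rho_i\rho_j)$, and this is exactly the pairwise Gram geometry that leaks.

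\textbf{Twirling with independent keys.} I would use the Pauli twirl identity on $n$ qubits, $d=2^n$:
\[
\frac{1}{d^2}\sum_{P\in\{I,X,Y,Z\}^{\otimes n}}P\rho P^\dagger=\mathrm{Tr}(\rho)\,\frac{I}{d}.
\]
To prove it, expand $\rho$ in the Pauli basis. Each non-identity Pauli $Q$ commutes with exactly half the $P$'s and anticommutes with the other half, so its coefficient averages to zero, and only the $I/d$ term survives. Since the keys $a,b$ are independent, the joint average factorises:
\[
\mathbb{E}_{a,b}\bigl[(P_a\rho_iP_a^\dagger)\otimes(P_b\rho_jP_b^\dagger)\bigr]=\frac{I}{d}\otimes\frac{I}{d}.
\]
This state is independent of $(\rho_i,\rho_j)$, so every pairwise statistic, including the overlap, becomes uninformative.

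For contrast I would add one line on the correlated case. A shared key $a=b$ gives $\frac1{d^2}\sum_P (P\otimes P)(\rho_i\otimes\rho_j)(P\otimes P)^\dagger$, which is not maximally mixed in general. It retains the overlap information, which is exactly why the key must be sampled per client.

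The main obstacle is making ``leaking'' precise in (ii). I would state it as an operational claim: the map $(\rho_i,\rho_j)\mapsto$ adversary's view is constant under fresh keys but not under a shared $U$, witnessed by a swap test. I would also reconcile (ii) with (i). The common mask in (i) lives on the classical homomorphic layer, where it is protected by the (Ring-)LWE encryption underneath rather than by the mask itself. The quantum channel, by contrast, has no such outer layer, so there the mask must carry the information-theoretic guarantee alone.
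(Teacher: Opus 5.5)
Your proposal is correct and follows the same route as the paper's proof. In (i), bilinearity of the Hamilton product makes $L_u$ linear, so a common mask commutes with the weighted sum and is removed by $\bar u\otimes(\cdot)$ after decryption; in (ii), the argument rests on unitary invariance of $\mathrm{Tr}(\rho_i\rho_j)$ and on the Pauli twirl applied independently to each factor, and your additions (orthogonality of $L_u$, the Pauli-basis proof of the twirl, factorisation over independent keys, and the shared-key swap-test contrast) are correct elaborations of what the paper leaves terse.
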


\subsection{Well-definedness: branch, sign, and which mean FedAvg computes}
\label{sec:geometry}

Quaternions double-cover $\SU(2)$, meaning $q$ and $-q$ denote the same rotation, and angles wrap at $\pm\pi$, so any protocol averaging these representatives must specify both the chosen representative and the target mean. Both are resolved client-side in plaintext, incurring zero depth and zero communication. Specifically, angles are wrapped to the principal branch, $\theta\leftarrow\theta-2\pi\lfloor(\theta+\pi)/2\pi\rfloor$, and the hemisphere representative satisfying $\langle q_k,\bar q_{\text{prev}}\rangle\geqslant0$ is selected. The subsequent recovery via $\mathrm{atan2}$ is sign-exact, and renormalisation is scale-invariant. This leaves one discrepancy: the renormalised quaternion mean constitutes the chordal mean on the sphere~\citep{markley2007averaging}, whereas standard FedAvg averages angles directly.

Given a fixed axis, weights $w_k\geqslant0$ summing to one, a standard mean $\bar\theta=\sum_k w_k\theta_k$, and deviations $\delta_k=\theta_k-\bar\theta$, the angle $\hat\theta$ recovered from the renormalised weighted quaternion mean satisfies the following relation. Provided all representatives share a hemisphere,
\begin{equation}
\hat\theta - \bar\theta \;=\; -\frac{1}{24}\sum_k w_k \delta_k^3 \;+\;
O(\delta^5)
\label{eq:cubic}
\end{equation}
(Proposition~\ref{prop:cubic}, proved in Appendix~\ref{app:geometry}). This bound fails by exactly $\pi$ for configurations straddling the branch cut, a scenario excluded by construction via the aforementioned wrapping step.

The factor $\tfrac{1}{24}$ rather than the naive $\tfrac{1}{6}$ comes from the encoding: the protocol averages quaternion components at $\theta/2$. This formulation yields a discrepancy that can be computed \emph{a priori} directly from observed client disagreement, and it identifies near-antipodal disagreement as the singular configuration in which quaternion FedAvg and standard angle FedAvg diverge (measured in \S\ref{sec:exp-geometry}).

\subsection{Entanglers cost's dependence on gates and independence of depth}
\label{sec:entanglers}

A reasonable objection to any rotation-only reduction is that modern ansatzes
depend on \emph{parameterised} entanglement. It dissolves under compilation.

\begin{lemma}[Constant-overhead entangler compilation]
\label{lem:entangler}
Every controlled single-qubit rotation decomposes exactly into at most two CNOTs plus single-qubit
rotations~\citep{barenco1995elementary}, and every two-qubit unitary into at most
three~\citep{vatan2004optimal}; in particular
$R_{ZZ}(\theta)=\mathrm{CNOT}\cdot(I\otimes R_Z(\theta))\cdot\mathrm{CNOT}$. The
continuous parameter therefore always lands on a single-qubit rotation, costing
one level by Proposition~\ref{prop:degree2}(ii), while the CNOTs are fixed
Clifford elements whose key update is the depth-$0$ XOR relabelling
$(a_1,b_1,a_2,b_2)\mapsto(a_1,\,b_1\oplus b_2,\,a_1\oplus a_2,\,b_2)$, and
commutation through the pad, $R_Z(\theta)X^{a}=X^{a}R_Z((-1)^{a}\theta)$, makes
only the \emph{sign} of the angle key-conditioned. Parameterised entanglers thus
multiply gate count by a constant factor and leave the multiplicative-depth class
unchanged. (Proof in Appendix~\ref{app:correctness}.)
\end{lemma}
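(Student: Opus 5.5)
The plan is to verify the lemma's claims one at a time, each by direct algebra on Pauli matrices, and then assemble them into a counting argument for depth.

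\textbf{Step 1: the exact decompositions.} For the two cited decompositions (at most two CNOTs per controlled rotation, at most three per general two-qubit unitary) I take \citep{barenco1995elementary,vatan2004optimal} as given. The $R_{ZZ}$ identity I check directly. Conjugation by CNOT (control on qubit $1$) sends $I\otimes Z$ to $Z\otimes Z$. Since $R_Z(\theta)=e^{-i\theta Z/2}$, exponentiating gives
\[
\mathrm{CNOT}\,(I\otimes R_Z(\theta))\,\mathrm{CNOT}=e^{-i\theta Z\otimes Z/2}=R_{ZZ}(\theta).
\]
In every case the continuous parameter sits in single-qubit factors, each of which is an $\SU(2)$ element up to global phase. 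Proposition~\ref{prop:degree2}(ii) then charges one multiplicative level per such factor. Because these factors act on disjoint qubits or occupy sequential layers whose keys are updated independently, they do not stack multiplicatively. Each is a plaintext--ciphertext update of its own encrypted key quaternion.

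\textbf{Step 2: the Clifford key update.} I write the pad as $X^{a_1}Z^{b_1}\otimes X^{a_2}Z^{b_2}$ and use the conjugation rules
\[
X\otimes I\mapsto X\otimes X,\quad I\otimes X\mapsto I\otimes X,\quad Z\otimes I\mapsto Z\otimes I,\quad I\otimes Z\mapsto Z\otimes Z.
\]
These show that $\mathrm{CNOT}$ maps the padded state to one padded by $X^{a_1}Z^{b_1\oplus b_2}\otimes X^{a_1\oplus a_2}Z^{b_2}$, up to a global phase. This is exactly the stated relabelling. It is XOR on key bits, so it is linear over $\mathbb{F}_2$ with no multiplications, hence depth $0$.

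\textbf{Step 3: the sign rule.} I use $XZX=-Z$, which gives $X R_Z(\theta) X=R_Z(-\theta)$, and therefore $R_Z(\theta)X^a=X^aR_Z((-1)^a\theta)$. The $Z$-part of the pad commutes with $R_Z$. So only the $X$-key bit $a$ of the target qubit conditions the angle, and only through its sign. In quaternion coordinates, flipping the sign of $\theta$ negates the vector part of $q$ in Eq.~\ref{eq:quat-iso-app}. That is multiplication by a plaintext $\pm1$ diagonal, or equivalently conjugation by a fixed Pauli quaternion, so it costs depth $0$ or at most the one plaintext multiplication already counted.

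\textbf{Step 4: counting.} Each parameterised entangler becomes at most three CNOTs plus $O(1)$ single-qubit rotations, so gate count grows by a constant factor. CNOTs contribute depth $0$ by Step 2, and each rotation contributes one level by Proposition~\ref{prop:degree2}(ii). The depth class is therefore unchanged.

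\textbf{Main obstacle.} The hard part is the accounting in Step 3. A key-conditioned sign threatens to need a ciphertext--ciphertext product $(-1)^a\cdot q$ whenever $a$ is itself encrypted. I would first try to resolve this by noting that on the delegated track the client holds $a$ in plaintext and applies the correction client-side. If that is not available, I would argue that $(-1)^a=1-2a$ is affine in $a$, so the correction adds at most one level. That still leaves depth constant per entangler, although I would need to check that such corrections do not compound along a chain of entanglers.
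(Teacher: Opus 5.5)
This is essentially the paper's own proof (cite the two exact decompositions, charge the single-qubit rotation one level via Proposition~\ref{prop:degree2}(ii), read off the CNOT key map as an XOR relabelling, invoke the sign rule, then count), and your explicit checks of the $R_{ZZ}$ identity, the CNOT conjugation table and $XR_Z(\theta)X=R_Z(-\theta)$ are all correct; note, though, that sequential rotations on one wire do consume successive levels (the paper's remark after the proof says the depth class depends on the number of sequential rotations), so rely on Step~4's additive count rather than Step~1's remark that they do not stack. The paper's proof simply treats the key-conditioned sign as a constant-cost item and never resolves the obstacle you raise, and of your two remedies keep the second, because client-side correction would reintroduce the per-gate interaction the protocol is built to avoid, whereas $R_Z(\theta)X^a=X^aR_Z(-2a\theta)R_Z(\theta)$ with residual quaternion $(1-a)(1,0,0,0)+a\,q(-2\theta,\hat z)$, affine in the encrypted bit with plaintext coefficients, so each entangler adds a constant number of levels and chains of entanglers accumulate them only additively, as sequential rotations already do.
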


Much like lowering convolutions to matrix multiplication, the operation compiles exactly onto existing kernels at constant overhead, avoiding the need for custom extensions. \S\ref{sec:exp-entangler} validates this empirically.

\section{Experiments}
\label{sec:experiments}

\vspace{-0.2cm}
\subsection{Setup and claims' scope}
\label{sec:setup}

\textbf{Mechanism-level} claims -- the depth ledger,
non-interactivity, backend independence, exact entangler compilation, the
aggregation-geometry bound -- are properties of the algebra and the protocol and
are expected to generalise. \textbf{Run-level} claims -- every MSE value --
characterise a small hybrid model in one configuration, and are reported as
feasibility evidence with error bars. The task is California Housing regression~\citep{pace1997sparse},
with Wine Quality~\citep{cortez2009modeling} as an independent replication; the
model is a six-qubit, six-layer variational circuit ($36$ trainable angles, a
CNOT ring after each rotation layer) feeding a small classical head, evaluated by
exact statevector simulation with parameter-shift
gradients~\citep{Qiskit,mitarai2018quantum}. Encryption uses the LWE-based
fixed-point multi-key scheme of~\citet{Ma_2022} at $12$-bit precision unless
stated, cross-checked against CKKS over OpenFHE~\citep{badawi2022openfhe};
hardware runs use \texttt{ibm\_fez}~\citep{IBMQuantum}. Configurations, seeds and
job provenance are in Appendix~\ref{app:experiments}.
\vspace{-0.2cm}
\subsection{The depth ledger's exactness and universality}
\label{sec:exp-exact}
\vspace{-0.2cm}
\begin{table}[H]
    \centering
    \setlength{\tabcolsep}{3.5pt}
    \begin{minipage}[t]{0.485\textwidth}
        \centering
        \resizebox{\linewidth}{!}{%
        \begin{tabular}{@{}lccc@{}}
        \toprule
        \textbf{Backend} & \textbf{Single client} & \textbf{Post-FedAvg} & \textbf{Overhead} \\
        \midrule
        Multi-key (12-bit) & $1.56\times10^{-4}$ & $1.56\times10^{-4}$ & $\mathbf{0.0}$ \\
        OpenFHE CKKS & $4.76\times10^{-12}$ & $2.75\times10^{-12}$ & $\mathbf{-2.0\times10^{-12}}$ \\[2.5ex] 
        \bottomrule
        \end{tabular}}
        \vspace{1mm}
        \caption{Depth-0 aggregation exactness (float64 decode, 3 clients). Overhead isolates single-client vs.\ post-FedAvg error.}
        \label{tab:exact}
    \end{minipage}\hfill
    \begin{minipage}[t]{0.485\textwidth}
        \centering
        \resizebox{\linewidth}{!}{%
        \begin{tabular}{@{}lcc@{}}
        \toprule
        \textbf{Arm} & \textbf{Final MSE} & \textbf{Paired $\Delta$ vs.\ Plain (95\% CI)} \\
        \midrule
        Plaintext & $1.3436\pm0.72$ & --- \\
        Encrypted (12-bit) & $1.3436\pm0.72$ & $+9.0\times10^{-6}\;[-2.3,+2.5]\times10^{-4}$ \\
        Encrypted (8-bit) & $1.3442\pm0.72$ & $+6.0\times10^{-4}\;[-2.6,+3.8]\times10^{-3}$ \\
        \bottomrule
        \end{tabular}}
        \vspace{1mm}
        \caption{Utility under encryption (5 paired seeds, 3 clients). Maximum round-0 loss gap is exactly $0.00$ across all seeds.}
        \label{tab:utility}
    \end{minipage}
    \vspace{-3mm}
\end{table}

Proposition~\ref{prop:degree2}(iii) predicts that weighted aggregation is not merely cheap but \emph{error-free relative to per-client encoding}, in any levelled scheme. This prediction is evaluated via a single measurement -- three clients, $36$ angles each: encode, aggregate homomorphically, decrypt, compare -- executed on two cryptographically distinct backends, decoding in float64 so no cast floor masks the result. Aggregation overhead is $0.0$,rad on the fixed-point multi-key stack (single-client and post-FedAvg angle MAE both $1.5632\times10^{-4}$) and $-2.0\times10^{-12}$,rad on OpenFHE CKKS, with the latter representing independent rounding errors averaging down rather than a systematic effect (Table~\ref{tab:exact}). The two stacks differ by eight orders of magnitude in their inherent encoding error yet agree exactly on the algebraic depth-1 and depth-0 properties. Finally, the same run demonstrates that the noise budget functions as a dial -- each precision bit roughly halves the angle error, spanning a $60\times$ range (Table~\ref{tab:precision}) -- establishing the $12$-bit operating point at $1.52\times10^{-4}$\,rad as a deliberate set-point.
\vspace{-0.2cm}
\subsection{Utility as a function of encryption and noise}
\label{sec:exp-parity}

The practical question is simpler: what effect does switching encryption on have on the loss? Answering this requires a paired design, as the between-seed spread of the final MSE is $\approx0.72$ -- three to five orders of magnitude larger than any plausible encryption effect. Data loaders are reseeded and rebuilt immediately before each arm, isolating the encryption round trip as the \emph{only} variable. In experiments, arms starting identically must yield equal round-0 loss, and the maximum round-0 gap across five seeds is exactly $0$. The result is parity -- plaintext $1.3436\pm0.7228$ against encrypted $1.3436\pm0.7229$, with a mean paired difference of $+9\times10^{-6}$ MSE ($95\%$ CI $[-2.3,+2.5]\times10^{-4}$, $p=0.923$ paired $t$, and $p=0.625$ Wilcoxon; Table~\ref{tab:utility}).

A single-seed comparison in this configuration would have shown the encrypted arm ahead, and the available explanation -- that bounded approximation noise acts as a stochastic regulariser on a rugged variational landscape~\citep{mcclean2018barren} -- is superficially compelling. However, it makes a falsifiable prediction: increased noise should yield greater improvements. To test this, the encrypted arm was re-run at $8$-bit precision (a $16\times$ larger quantisation step, identical seeds). The noisier arm offers no improvement ($\Delta=+0.0006$ MSE, $p=0.636$). Thus, encryption incurs no utility tax.

\subsection{The aggregation-geometry bound}
\label{sec:exp-geometry}

Proposition~\ref{prop:cubic} is evaluated by scattering five clients' $36$ angles around a base vector at dispersions $\sigma\in\{0.05,0.2,0.5,1.0\}$\,rad and comparing the realised chordal-versus-Euclidean difference through the secure stack against the closed-form prediction. The cubic term tracks the measured bias with correlations of $1.0000$, $1.0000$, $0.9999$ and $0.9991$. At the operating dispersion, the $2.50\times10^{-6}$\,rad bias closely matches the predicted $2.49\times10^{-6}$ and sits $47\times$ below the concurrent encryption noise floor, ensuring the choice of mean alters no reported figure (Figure~\ref{fig:results}c). An adversarial double-cover configuration is also constructed (client groups straddling the branch cut, $\approx2\pi$ apart), where the naive arithmetic mean errs by $3.1416$\,rad against $0.0000$ for the exact chordal mean. This defines the strict boundary of the bound, but it never arises in trained runs, as the operations in \S\ref{sec:geometry} structurally exclude it at zero depth and zero communication by re-wrapping into $(-\pi,\pi]$ every round.
\vspace{-0.2cm}
\subsection{Parameterised entanglers}
\label{sec:exp-entangler}

Lemma~\ref{lem:entangler} predicts $R_{ZZ}(\theta)$ traverses the encrypted pipeline as two fixed CNOTs and one rotation, adding no rotation-track depth. This exact circuit was executed end-to-end for $20$ angles in $[0,\pi]$ at $4096$ shots each. The mean recovered-angle error is $8.3\times10^{-3}$\,rad (maximum $2.1\times10^{-2}$), evaluated against the estimator's intrinsic floor: a $Z$-basis estimate from $N$ shots yields an expected pure-noise MAE of $1.25\times10^{-2}$\,rad ($\sigma=1.6\times10^{-2}$\,rad). With the measured mean strictly \emph{below} this floor and the worst trial reaching only $1.3\sigma$, the encrypted entangler introduces no resolvable angle error. Furthermore, every trial consumed exactly two CNOTs independently of $\theta$, and the second qubit returned to $\ket{0}$ with probability $1.000$---a direct verification of the XOR key relabelling, as any update failure would leave residual entanglement.

Per trial, encryption requires $6$\,ms and decryption $51$\,ms, whilst homomorphic CNOT key updates consume $\approx1.5$\,h per gate. Since this update is algorithmically $O(1)$ and angle-independent (two zero-depth XOR operations on pad bits), this disparity stems entirely from implementation overhead, not complexity. Boolean gates are currently evaluated within a single-threaded fixed-point reference chosen for auditability; migrating to a Boolean-native TFHE scheme~\citep{chillotti2020tfhe} resolves this. Crucially, this cost vanishes in the federated setting, where local plaintext training reduces per-round homomorphic work to a single depth-0 aggregation; it applies solely to delegated encrypted execution (Appendix~\ref{exp:e4}). The run thus proves exactness and constant overhead, but not throughput.
\vspace{-0.2cm}
\subsection{Hardware}
\label{sec:hardware}

An encryption layer contributing more than the device's native gate error would
cap achievable accuracy regardless of training budget. The encrypted round trip
on \texttt{ibm\_fez} attains state fidelity $F=0.9918$. To \emph{isolate} the
encryption's contribution,a matched unencrypted
control was run -- $100$ angles under the same seeded sampling scheme, $4096$ shots
each, same device, calibration snapshot alongside -- giving $F=0.99957$
(s.d.\ $0.00107$). The $0.0078$ difference is comparable to the device's median
readout error at the time ($0.0083$), i.e.\ ordinary device variation, while the encryption arithmetic is separately bounded at $1.55\times10^{-4}$\,rad by the
homomorphic-layer-only round trip. The system is hardware-limited, not
encryption-limited.
\vspace{-0.2cm}
\subsection{Independence from the dataset and client count}
\label{sec:exp-scale}

The evaluation is extended in two directions (five seeds per point; tables in Appendix~\ref{exp:e6}). Scaling from $3$ to $20$ clients increases the final MSE slightly: from $1.3035$ to $1.3763$ on California Housing, and from $1.4214$ to $1.5246$ on Wine Quality. This degradation is mild and monotonic, showing no aggregation blow-up despite a $6.7\times$ jump in participants. Furthermore, training out to $50$ rounds improves the mean on both datasets ($1.2285\to1.0831$; $1.3148\to1.1275$) and narrows the seed spread ($0.54\to0.37$; $0.20\to0.09$). This behaviour confirms standard federated convergence, entirely unhindered by the encryption layer. Finally, although Wine Quality differs in feature count, scale, and target, it reproduces every qualitative trend. This includes a precision--bandwidth relation that is numerically identical across both datasets, confirming it as a property of the protocol itself.
\vspace{-0.2cm}
\subsection{Memory cost}
\label{sec:comm}

Non-interactivity is bought with bandwidth, and the presentation states
where the trade stops being beneficial. Per client-round the protocol moves $22.67$\,MB up
and the same down in \emph{one} sequential exchange; interactive baselines move
about $0.55$\,MB per client-round -- roughly $41\times$ less traffic over a
$50$-round schedule -- but need $37$ to $73$ sequential exchanges per federated
round, $1{,}850$ to $3{,}650$ over the schedule, each bound by network latency.
On one axis (Table~\ref{tab:crossover}, Appendix~\ref{exp:costmodel5}): at
$50$\,ms round-trip latency the interactive floor is $1.9$--$3.7$\,s per round
before any payload moves, while this protocol's $45.3$\,MB take $0.36$\,s at
$1$\,Gbps, $3.6$\,s at $100$\,Mbps and $36$\,s at $10$\,Mbps. The crossover sits
near $100$\,Mbps at wide-area latency: above it, the protocol is beneficial; below it or on a
low-latency metro link, the interactive baselines outweigh the benefit. The cross-institution regime
this work targets sits above the crossover; edge deployment does not. Precision is a second lever: traffic falls from $6{,}485$\,MB at
$12$ bits to $4{,}375$\,MB at $8$ bits with final loss statistically unchanged
(\S\ref{sec:exp-parity}). On compute, protocol accounting gives a $10\times$ to
$30\times$ per-rotation reduction against Solovay--Kitaev as $\epsilon$ tightens
from $10^{-2}$ to $10^{-6}$ (Appendix~\ref{exp:costmodel6}) -- a gate-count
estimate, not a wall-clock measurement. 
\vspace{-0.2cm}
\section{Discussion and limitations}
\label{dis-lim}
\vspace{-0.1cm}

Mechanism-level claims rest on proofs and configuration-insensitive measurements: the depth ledger holds on two independent backends, the aggregation bound spans two orders of magnitude of dispersion, and the entangler compilation is exact. Run-level claims provide feasibility evidence at a deliberately small scale, rigorously designed for falsifiability -- pairing is verified, and the most flattering interpretation of the empirical data was explicitly tested and rejected (\S\ref{sec:exp-parity}).

The protocol converts a latency problem into a throughput problem, offering a favourable trade above roughly $100$\,Mbps at wide-area latency (\S\ref{sec:comm}). Ciphertext compression represents the natural next step. This poses an equity concern alongside performance, given that environments requiring federated privacy are often the least well-connected. Separately, the encrypted-CNOT path costs $\approx1.5$\,h per gate in the current reference implementation (an implementation constraint that currently bounds delegated encrypted execution).

The composed guarantee is information-theoretic at the quantum-state level and computational under (Ring-)LWE on the classical layer, leaving the system bounded by the weaker link. The reported runs utilise a reference lattice profile chosen strictly for structural verification and carry no formal security guarantees (Appendix~\ref{app:security}). Decryption capability is shared across clients -- the standard shared-key assumption, with threshold key generation~\citep{mouchet2021multiparty} as a drop-in hardening -- whilst the honest-but-curious model precludes anomaly detection, leaving silent poisoning unaddressed. Generalising to larger client populations, further datasets, and alternative ansatz families remains future work, and hardware validation required access to a $156$-qubit processor (Appendix~\ref{app:broader-impact}).
\vspace{-0.2cm}
\section{Conclusion}

Encrypted training is usually framed as an architectural challenge, but for a growing class of models, it is a coordinate problem. In the spin chart, the composition law of $\SU(2)$ (and any $\Spin(n)$) is bilinear. Therefore, encrypted rotation updates cost one multiplicative level, federated averaging costs zero, and bootstrapping never enters the loop. The exportable principle is clear: when parameters reside on a group, the group law dictates encryption feasibility.
\vspace{-0.2cm}
\subsection*{Reproducibility statement}

Every number reported in this paper is produced by a script released upon acceptance or upon request. Appendix~\ref{app:experiments} records, per experiment,
the claim under test, the entry point, the exact command line, the seeds, the
cryptographic stack and precision, the hardware, and the job identifiers;
Appendix~\ref{app:security} records the lattice parameter profiles and states
which runs used which. Proposition~\ref{prop:degree2},
Proposition~\ref{prop:cubic} and Lemma~\ref{lem:entangler} are proved in
Appendices~\ref{sec:degree2}, \ref{app:geometry} and~\ref{app:correctness}
respectively. The two backend paths, the two aggregation compositions used by
the training harness and by the protocol, and the float64 decode setting used to
avoid a float32 cast floor in the exactness measurements are documented in
Appendix~\ref{app:experiments} so that any reported figure can be traced to the
code path that produced it. Appendix~\ref{app:llm} discloses the use of LLMs.
\vspace{-0.2cm}
\subsection*{Ethics statement}

This work strengthens federated confidentiality without introducing new surveillance or data extraction capabilities. Three structural limitations remain. Firstly, server blindness precludes update auditing, creating tension with model-inspection regimes and exposing the system to silent client poisoning. Secondly, physical validation requires institutional quantum and HPC access, risking the concentration of benefits among already well-resourced actors. Finally, the bandwidth profile disadvantages the exact low-connectivity participants for whom federated privacy guarantees are most valuable. These constraints and dual-use/compliance implications are detailed in Appendix~\ref{app:broader-impact}.

\bibliographystyle{iclr2027_conference}
\bibliography{iclr2027_conference}

\newpage
\appendix

\section*{Appendix overview}
\textbf{A}~quantum and federated-learning preliminaries;
\textbf{B}~proof of the degree-2 depth ledger and its $\Spin(n)$ generalisation;
\textbf{C}~aggregation correctness, the mask lemma, and the entangler
compilation proof;
\textbf{D}~aggregation geometry, with the proof of the cubic bound;
\textbf{E}~protocol details for Pauli-OTP and the quaternion track;
\textbf{F}~the detailed construction of encrypted QNN training;
\textbf{G}~experimental specifications, per experiment;
\textbf{H}~cost models;
\textbf{I}~scale tests;
\textbf{J}~implementation and security profiles;
\textbf{K}~extended discussion and limitations;
\textbf{L}~broader impact;
\textbf{M}~use of large language models.

\section{Quantum and Federated-Learning Preliminaries}
\label{app:preliminaries}

This appendix collects the minimum needed to follow \S\ref{sec:reduction}.
Readers fluent in federated learning can skip \S\ref{app:fl}; readers new to
quantum gates should read \S\ref{app:quantum}. Standard references are
\citet{nielsen2010quantum} for the quantum material and
\citet{kairouz2021advances} for federated learning.

\subsection{Federated learning}
\label{app:fl}

$K$ clients hold private datasets $\mathcal{D}_k$ of size $n_k$, with
$n=\sum_k n_k$. In each round every client initialises from the global
parameters, trains locally, and returns an update; the server forms
$\bar\vartheta = \sum_k \tfrac{n_k}{n}\vartheta^{(k)}$~\citep{mcmahan2017}. The
privacy failure that motivates encryption is that $\vartheta^{(k)}$ is
informative about $\mathcal{D}_k$: gradient-inversion attacks recover training
examples from transmitted updates alone~\citep{zhu2019deep,geiping2020inverting}.
Homomorphic aggregation removes the server from the trust boundary by making
the sum computable on ciphertexts~\citep{gentry2009fully,zhang2020batchcrypt}.

\subsection{Quantum preliminaries}
\label{app:quantum}

A qubit state is a unit vector in $\mathbb{C}^2$, written
$\ket{\psi}=\alpha\ket{0}+\beta\ket{1}$; $n$ qubits live in
$(\mathbb{C}^2)^{\otimes n}$, a $2^n$-dimensional space that does not factorise
into independent classical parts, which is what entanglement means
operationally. Gates are unitary matrices. The Pauli group is generated by
$X,Y,Z$; the Clifford group is the normaliser of the Pauli group, containing
$H$, $S$ and CNOT. Single-qubit rotations
$R_{\hat u}(\theta)=\exp(-i\theta\,\hat u\cdot\vec\sigma/2)$ are the trainable
weights of a variational circuit; they are not Clifford for generic $\theta$,
which is the entire source of difficulty in \S\ref{sec:intro}. Measurement in
the $Z$ basis returns a bit with probability given by the squared amplitudes,
so all reported quantities are estimated from finite shot counts and carry
sampling error $\propto 1/\sqrt{N}$.

A variational quantum circuit alternates rotation layers with a fixed
entangling pattern and is trained by gradient descent, with gradients from the
parameter-shift rule~\citep{mitarai2018quantum,schuld2019evaluating} or from
simulator backpropagation. Optimisation pathologies specific to this model class
-- barren plateaus~\citep{mcclean2018barren} -- are the reason a purely
quantum configuration underperforms the hybrid design
(Appendix~\ref{purequantum}).

\section{The Depth Ledger: Proofs}
\label{sec:degree2}

\begin{proof}[Proof of Proposition~\ref{prop:degree2}]
\emph{(i)} By Eq.~\ref{eq:quat-mul} each component of $q\otimes r$ is a sum of
terms $\pm q_j r_k$, each of total degree two, with coefficients
$M^{(i)}_{jk}\in\{-1,0,+1\}$ read off the Hamilton table; no other monomials
occur. \emph{(ii)} Substituting plaintext $r_k$ into Eq.~\ref{eq:quat-mul}
leaves a $\mathbb{R}$-linear form in the encrypted $q_j$ whose coefficients are
plaintext, so $\Enc(q')$ is obtained by plaintext--ciphertext multiplications
and additions, consuming one multiplicative level by assumption on
$\mathcal{E}$. \emph{(iii)} $\sum_k \tfrac{n_k}{n}\Enc\bigl(q^{(k)}\bigr)$ is an
$\mathbb{R}$-linear combination of ciphertexts with plaintext scalars, which
$\mathcal{E}$ supports at depth $0$; renormalisation is applied to the decrypted
plaintext and is therefore outside the homomorphic layer entirely. Composing
(ii) and (iii) gives the round bound.
\end{proof}

\begin{corollary}[Unbounded round count]
\label{cor:fedavg-depth}
A parameterisation of $\mathcal{E}$ supporting multiplicative depth $\geqslant1$
suffices for an unbounded number of federated rounds, each consisting of one
server-side plaintext rotation applied to every encrypted key followed by one
weighted aggregation, provided clients re-encrypt between rounds. No
bootstrapping is required in the inner loop.
\end{corollary}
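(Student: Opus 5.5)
The plan is an induction on the round index $t$, with Proposition~\ref{prop:degree2} as the base step and client-side re-encryption as the step that keeps depth from piling up. Fix a parameterisation of $\mathcal{E}$ with multiplicative budget $L\geqslant1$. The invariant I would maintain is: \emph{at the start of every round, every ciphertext held by the server is a fresh encryption, at level $L$ with fresh noise, of a unit quaternion produced by a client.} At $t=0$ this holds because clients encrypt $q(\theta^{(k)})$ by Eq.~\ref{eq:quat-iso-app} directly before upload.

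For the inductive step, assume the invariant at round $t$. By Proposition~\ref{prop:degree2}(ii), applying a plaintext rotation $V^{-1}$ (quaternion $r$) to each $\Enc(q^{(k)})$ uses only plaintext--ciphertext multiplications with coefficients $\pm r_j$ and additions. It therefore consumes exactly one level. By Proposition~\ref{prop:degree2}(iii), the weighted sum $\sum_k \tfrac{n_k}{n}\Enc\bigl(q^{(k)}\bigr)$ is a plaintext-scalar linear combination and consumes zero further levels. Total depth is at most one, which is at most $L$. Correctness of decryption then follows from the homomorphism property of $\mathcal{E}$ within its budget, and the renormalisation and $\mathrm{atan2}$ recovery are done in plaintext.

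The clients then run the next round's steps entirely in plaintext: decrypt, renormalise, recover $\hat\theta$, wrap, align the hemisphere per \S\ref{sec:geometry}, train locally, and encode afresh. Re-encrypting under the shared key yields fresh level-$L$ ciphertexts, so the invariant holds at $t+1$. No server-side operation ever receives a ciphertext that has already consumed a level in a previous round. The per-round depth is therefore bounded by one, independent of $t$, and bootstrapping (the only mechanism for restoring levels server-side) is never invoked.

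The main obstacle is noise, not depth. A levelled scheme's correctness depends on the noise staying below the decryption threshold. Additions with weights $n_k/n$ grow the noise at most additively in $K$; in CKKS, the plaintext multiplication also adds a rescaling error. I would make this explicit by bounding the post-round noise by $\sum_k \tfrac{n_k}{n}(\|r\|_1 B_{\mathrm{fresh}}+B_{\mathrm{mult}})\leqslant 4B_{\mathrm{fresh}}+B_{\mathrm{mult}}$, using $\|r\|_1\leqslant 2$ for unit $r$ and $\sum_k n_k/n=1$. This bound is uniform in $t$ and $K$ because fresh encryption resets $B_{\mathrm{fresh}}$ each round, so a single parameter choice works for all rounds. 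The claim is thus conditional on the ``clients re-encrypt'' hypothesis. Without that hypothesis, depth would accumulate as $t$, and the statement would fail.
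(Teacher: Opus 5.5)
Your proof is correct and takes the same route as the paper, stated more explicitly. Proposition~\ref{prop:degree2} caps each round at one level, and client-side re-encryption resets the budget at no homomorphic cost, because clients already hold the decrypted aggregate; your fresh-ciphertext invariant is exactly this argument turned into an induction on $t$. The noise paragraph goes beyond the paper, which argues about depth only, and two small corrections apply: with $\|r\|_1\leqslant 2$ your displayed sum is at most $2B_{\mathrm{fresh}}+B_{\mathrm{mult}}$, and its independence of $K$ holds only if the fractional weights $n_k/n$ are applied homomorphically, since with the integer weights $n_k$ of Eq.~\ref{eq:agg} the noise scales with $n$, though uniformity in $t$, which is all the corollary needs, survives either way.
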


\begin{proof}
Each round consumes at most one level by Proposition~\ref{prop:degree2}, and
re-encryption at the client resets the level budget at zero homomorphic cost
because clients already decrypt the aggregate in step 5 of the protocol.
\end{proof}

\begin{proof}[Proof of Corollary~\ref{cor:spin}]
Rotors $R\in\Spin(n)\subset\mathrm{Cl}^{[0]}(n)$ compose by the geometric
product, which is bilinear on the algebra: writing $R=\sum_A R_A e_A$ over the
even-graded basis $\{e_A\}$, the product $RR'$ has components
$\sum_{A,B} c^{C}_{AB} R_A R'_B$ with structure constants
$c^{C}_{AB}\in\{-1,0,+1\}$ fixed by the algebra~\citep{lounesto2001clifford}.
This is Eq.~\ref{eq:quat-mul} with a larger index set, and the argument of
Proposition~\ref{prop:degree2} applies unchanged. For $n=3$, the even subalgebra
is $\HH$ and the statement reduces to the quaternion case.
\end{proof}

\begin{remark}[Why not Euler angles]
The obstruction the reduction removes is worth naming precisely. In Euler
coordinates the composition law involves $\arctan$, $\arccos$ and products of
trigonometric functions of encrypted arguments. A levelled scheme reaches these
only through polynomial approximation, at a multiplicative depth that grows with
the required accuracy and typically forces bootstrapping inside the training
loop~\citep{cheon2017ckks}. In the discrete-alphabet route, the same composition is
exact but requires $\Theta(\log^{c}(1/\epsilon))$ gates per
rotation~\citep{Dawson2005SK}, in practice above $25{,}000$~\citep{Ma_2022},
each non-Clifford element demanding an active key repair. Degree two is
therefore not a marginal improvement over these routes; it is a different
complexity class for the same computation.
\end{remark}

\section{Correctness of Aggregation and Entangler Compilation}
\label{app:correctness}

\subsection{What the server computes, and why it is correct}

Let client $k$ hold trained angles $\theta^{(k)}_g$ for gate $g$, encoded as
$q^{(k)}_g=q(\theta^{(k)}_g)$ by Eq.~\ref{eq:quat-iso-app} and encrypted
component-wise under keys shared by the clients. The server computes
\begin{equation}
\Enc\bigl(q^{\text{glob}}_g\bigr) \;=\; \sum_{k=1}^{K} n_k \cdot \Enc\left(q^{(k)}_g\right),
\label{eq:agg}
\end{equation}
using ciphertext addition and plaintext-integer weighting only. Decryption
correctness of $\mathcal{E}$ gives
$\mathrm{Dec}\biggl(\Enc\bigl(q^{\text{glob}}_g\bigr)\biggr) = \sum_k n_k q^{(k)}_g + \varepsilon$
with $\|\varepsilon\|$ bounded by the scheme's encoding error, measured in
\S\ref{sec:exp-exact}; clients divide by $n$, renormalise and recover
$\hat\theta_g = 2\,\mathrm{atan2}(-t_1,t_0)$. No ciphertext--ciphertext
multiplication occurs, so no relinearisation or rescaling is needed and the
level budget is untouched. The server holds an evaluation-only context: the
aggregation object exposes addition, scaling and serialisation, and decryption
raises an error.

\subsection{Proof of the mask lemma}

\begin{proof}[Proof of Lemma~\ref{lem:mask}]
\emph{(i)} Quaternion multiplication is $\mathbb{R}$-bilinear, so for fixed $u$
the map $L_u$ is $\mathbb{R}$-linear and commutes with any finite
$\mathbb{R}$-linear combination:
$\sum_k w_k (u\otimes q_k) = u\otimes(\sum_k w_k q_k)$. Hence a mask shared by
all clients survives averaging intact and is removed by one plaintext
multiplication by $u^{-1}=\bar u$ after decryption.
\emph{(ii)} For the delegation channel, let $\rho_i,\rho_j$ be two clients'
states. Conjugation by a shared unitary is an isometry of the Hilbert--Schmidt
inner product,
$\mathrm{Tr}[(U\rho_iU^\dagger)(U\rho_jU^\dagger)]=\mathrm{Tr}(\rho_i\rho_j)$,
so a common mask hides each state individually but preserves all pairwise
overlaps, which is a non-trivial function of the clients' private parameters.
With independent uniform Pauli keys, the twirl
$\tfrac{1}{4}\sum_{a,b}(X^aZ^b\otimes I)(\rho_i\otimes\rho_j)(X^aZ^b\otimes
I)^\dagger$ applied to each factor sends the pair to
$\tfrac{I}{d}\otimes\tfrac{I}{d}$, so no function of the pair survives.
\end{proof}

The practical reading: masks on the \emph{aggregation} channel may be common
(and in the reported protocol are simply absent, since the aggregated object is
the unmasked update quaternion under shared keys), whereas masks on the
\emph{delegation} channel must be per-client and fresh.

\subsection{Proof of the entangler compilation lemma}

\begin{proof}[Proof of Lemma~\ref{lem:entangler}]
Exact decompositions of controlled rotations into at most two CNOTs plus
single-qubit gates are given by~\citet{barenco1995elementary}, and of a general
two-qubit unitary into at most three CNOTs by the Cartan decomposition
argument of~\citet{vatan2004optimal}; both are exact identities, not
approximations. Applying either places the continuous parameter on a
single-qubit rotation, which by Proposition~\ref{prop:degree2}(ii) costs one
multiplicative level. The CNOTs are Clifford, so conjugation maps the Pauli
frame to another Pauli frame; for $X^{a_1}Z^{b_1}\otimes X^{a_2}Z^{b_2}$ the
induced key map is
$(a_1,b_1,a_2,b_2)\mapsto(a_1,\,b_1\oplus b_2,\,a_1\oplus a_2,\,b_2)$, a fixed
XOR relabelling of pad bits carrying no multiplicative depth and no dependence
on $\theta$. Finally $R_Z(\theta)X^{a}=X^{a}R_Z((-1)^{a}\theta)$, so passing the
rotation through the pad conditions only the sign of the angle on the key.
Hence the number of encrypted operations per entangler is a constant
independent of $\theta$ and of the ansatz, and the rotation-track depth is
unchanged.
\end{proof}

\begin{remark}[Consequence for expressivity]
Because the decompositions are exact, no expressivity is lost: any
parameterised entangler an unrestricted ansatz could use is available here at a
fixed gate-count factor. The depth class -- and therefore the absence of
bootstrapping -- depends only on the number of \emph{sequential} rotations. In
the federated loop it cannot grow at all, because clients train locally and
encrypt only the resulting angles.
\end{remark}

\section{Aggregation Geometry}
\label{app:geometry}

\begin{proposition}[Aggregation geometry]
\label{prop:cubic}
Fix an axis $\hat u$ and weights $w_k\geqslant0$, $\sum_k w_k=1$, and let
$\bar\theta=\sum_k w_k\theta_k$ with deviations
$\delta_k=\theta_k-\bar\theta$. Let $\hat\theta$ be the angle recovered from the
renormalised weighted quaternion mean. Then
\begin{equation*}
\hat\theta - \bar\theta \;=\; -\frac{1}{24}\sum_k w_k \delta_k^3 \;+\;
O(\delta^5),
\end{equation*}
provided all representatives lie in a common hemisphere. The bound fails, by
exactly $\pi$, for configurations straddling the branch cut, which the wrapping
step above excludes by construction.
\end{proposition}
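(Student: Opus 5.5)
The plan is to reduce everything to a one-dimensional computation with complex exponentials. With the axis $\hat u$ fixed, Eq.~\ref{eq:quat-iso-app} puts every $q_k$ in the two-dimensional real subalgebra spanned by $1$ and the imaginary unit along $\hat u$. This subalgebra is isomorphic to $\mathbb{C}$. I identify $q_k$ with the complex number $z_k = \cos\tfrac{\theta_k}{2} + i\sin\tfrac{\theta_k}{2} = e^{i\theta_k/2}$, where the overall sign $-\hat u$ is absorbed into the identification. Under this identification, the recovery map $\hat\theta = 2\,\mathrm{atan2}(-t_1,t_0)$, read along $\hat u$, becomes $\hat\theta = 2\arg\bigl(\sum_k w_k z_k\bigr)$. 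Renormalisation scales the modulus and not the argument, so it plays no role.

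Next I factor out the mean. I write $\theta_k/2 = \bar\theta/2 + \varepsilon_k$ with $\varepsilon_k = \delta_k/2$, and note $\sum_k w_k\varepsilon_k = 0$ because $\sum_k w_k\delta_k = 0$ by definition of $\bar\theta$. Then
\begin{equation*}
\sum_k w_k z_k = e^{i\bar\theta/2}\, S,\qquad S=\sum_k w_k e^{i\varepsilon_k},
\end{equation*}
and so $\hat\theta-\bar\theta = 2\arg S$. This identity holds only if no $2\pi$ ambiguity enters the argument, and that is where the hemisphere hypothesis is used. If all representatives lie in the hemisphere $\langle q_k,\bar q\rangle>0$ and the angles have been wrapped to a common branch, then $\operatorname{Re}S>0$. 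Hence $\arg S = \arctan(\operatorname{Im}S/\operatorname{Re}S)$ with the principal branch, and $\mathrm{atan2}$ introduces no jump.

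The Taylor step comes next. Let $\delta=\max_k|\delta_k|$. Expanding $e^{i\varepsilon}$ gives
\begin{equation*}
\operatorname{Re}S = 1-\tfrac12\sum_k w_k\varepsilon_k^2+O(\delta^4),
\qquad
\operatorname{Im}S = \sum_k w_k\varepsilon_k-\tfrac16\sum_k w_k\varepsilon_k^3+O(\delta^5),
\end{equation*}
and the linear term vanishes. Therefore $\operatorname{Im}S=O(\delta^3)$, and $\operatorname{Im}S/\operatorname{Re}S = -\tfrac16\sum_k w_k\varepsilon_k^3\,(1+O(\delta^2)) + O(\delta^5)$. Since $\arctan x = x+O(x^3)$ and $x^3=O(\delta^9)$, the arctangent contributes nothing below order five. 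Multiplying by $2$ and substituting $\varepsilon_k=\delta_k/2$ gives $2\cdot(-\tfrac16)\cdot\tfrac18\sum_k w_k\delta_k^3 = -\tfrac1{24}\sum_k w_k\delta_k^3$, with remainder $O(\delta^5)$. This is the claim. It also explains the $\tfrac1{24}$ as the naive $\tfrac16$ times the $\tfrac14$ from the half-angle encoding.

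For the failure clause, I take a configuration whose representatives straddle the cut, for example two groups at $\pm(\pi-\eta)$. Some $z_k$ are then replaced by $-z_k$ relative to a common branch, equivalently the arithmetic mean $\bar\theta$ is shifted by $\pi$ relative to the chordal mean. The factor-$2$ relation $\hat\theta=2\arg S$ turns this into a discrepancy of exactly $\pi$ at leading order, and the wrapping and hemisphere step of \S\ref{sec:geometry} excludes such configurations.

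I expect the main obstacle to be the branch bookkeeping rather than the expansion. The hard part is stating cleanly that $\hat\theta-\bar\theta=2\arg S$ holds as real numbers, not merely modulo $2\pi$, under the hemisphere hypothesis, and making the remainder uniform in $\delta$. I would do this by first proving $\operatorname{Re}S>0$ under the hypothesis, which for small $\delta$ is immediate from $\operatorname{Re}S\ge 1-\tfrac12\delta^2$. The rest is the routine series manipulation above.
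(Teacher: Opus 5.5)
Your proposal is correct and follows essentially the paper's route. Writing $\sum_k w_k z_k = e^{i\bar\theta/2}S$ with $S = m_c + i\,m_s$ is exactly the paper's angle-addition decomposition of $(t_0,-t_1)$ in complex notation; the same expansion using $\sum_k w_k\delta_k=0$ and the doubling of $\psi=\arg S$ follow, and your checks that $\operatorname{Re}S>0$ for small $\delta$ and that $\arctan x = x+O(x^3)$ leaves no new terms below order five make explicit two steps the paper leaves implicit.
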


\begin{proof}[Proof of Proposition~\ref{prop:cubic}]
Fix the axis and write $\varphi_k=\theta_k/2$, $\bar\varphi=\bar\theta/2$ and
$\delta_k=\theta_k-\bar\theta$, so $\varphi_k=\bar\varphi+\delta_k/2$ and
$\sum_k w_k\delta_k=0$. The weighted quaternion mean has components
\begin{align}
t_0 &= \sum_k w_k\cos\varphi_k = \cos\bar\varphi\, m_c - \sin\bar\varphi\, m_s,
&
-t_1 &= \sum_k w_k\sin\varphi_k = \sin\bar\varphi\, m_c + \cos\bar\varphi\, m_s,
\end{align}
with $m_c=\sum_k w_k\cos(\delta_k/2)$ and $m_s=\sum_k w_k\sin(\delta_k/2)$.
Therefore
$\mathrm{atan2}(-t_1,t_0) = \bar\varphi + \psi$ with
$\tan\psi = m_s/m_c$. Expanding,
$m_s = \tfrac12\sum_k w_k\delta_k - \tfrac{1}{48}\sum_k w_k\delta_k^3 +
O(\delta^5) = -\tfrac{1}{48}\sum_k w_k\delta_k^3 + O(\delta^5)$
using $\sum_k w_k\delta_k=0$, and
$m_c = 1 - \tfrac18\sum_k w_k\delta_k^2 + O(\delta^4)$. Hence
$\psi = -\tfrac{1}{48}\sum_k w_k\delta_k^3 + O(\delta^5)$ and
$\hat\theta = 2(\bar\varphi+\psi) = \bar\theta - \tfrac{1}{24}\sum_k
w_k\delta_k^3 + O(\delta^5)$, which is Eq.~\ref{eq:cubic}. Renormalisation does
not affect the recovered angle because $\mathrm{atan2}$ is scale-invariant. The
expansion requires the representatives to share a hemisphere; if two groups
straddle the branch cut, $m_c\to0$ and the recovered angle jumps by $\pi$.
\end{proof}

The factor $\tfrac{1}{24}$, rather than the $\tfrac16$ of a naive full-angle
expansion, is the visible trace of the half-angle encoding: the protocol
averages quaternion components at $\theta/2$.

\paragraph{Measured validation.}
Table~\ref{tab:e3} reports the dispersion sweep of \S\ref{sec:exp-geometry}: five
clients, $36$ angles each, five repetitions per dispersion, through the secure
stack at $12$-bit precision. ``Exact bias'' is the chordal-versus-arithmetic
difference computed on plaintext angles, isolating geometry from encryption
noise; ``prediction'' is Eq.~\ref{eq:cubic}; ``floor'' is the encryption noise
of the same runs.

\begin{table}[H]
\caption{Aggregation-geometry sweep. The cubic bound tracks the realised bias
across two orders of magnitude of client dispersion, and the bias is below the
encryption noise floor throughout.}
\label{tab:e3}
\centering
\small
\begin{tabular}{@{}lcccc@{}}
\toprule
$\sigma$ (rad) & Exact bias (rad) & Prediction (rad) & Correlation & Encryption floor (rad) \\
\midrule
$0.05$ & $2.4956\times10^{-6}$ & $2.4949\times10^{-6}$ & $1.0000$ & $1.1705\times10^{-4}$ \\
$0.20$ & $1.4067\times10^{-4}$ & $1.4014\times10^{-4}$ & $1.0000$ & $2.0381\times10^{-4}$ \\
$0.50$ & $2.1721\times10^{-3}$ & $2.1126\times10^{-3}$ & $0.9999$ & $2.1764\times10^{-3}$ \\
$1.00$ & $1.7974\times10^{-2}$ & $1.6236\times10^{-2}$ & $0.9991$ & $1.7996\times10^{-2}$ \\
\bottomrule
\end{tabular}
\end{table}

\paragraph{The branch-cut case, stated precisely.}
With two client groups placed at $\bar\theta\pm(\pi-\epsilon)$, $\epsilon=0.05$,
the naive arithmetic mean of angles errs by $3.1416$\,rad against the true
circular mean, and the exact chordal mean by $0.0000$\,rad. Two compositions
appear in our codebase and it is worth separating them explicitly: the
\emph{protocol} path aggregates encrypted quaternions and recovers the angle by
$\mathrm{atan2}$, hence computes the chordal mean; the \emph{scale-test harness}
composes a per-client encrypted round trip with masked secure aggregation of the
recovered angles, hence computes the Euclidean mean and inherits the $\pm\pi$
failure in this constructed case ($3.1414$\,rad). Neither is exercised
adversarially in any trained run, because the decode step re-wraps into
$(-\pi,\pi]$ every round; the principal-branch and hemisphere-alignment steps of
\S\ref{sec:geometry} close the case structurally, in plaintext, at zero depth
and zero communication. Since the two means differ by
$2.50\times10^{-6}$\,rad at the operating dispersion -- $47\times$ below the
encryption noise floor -- no reported number depends on which is used.

\section{Quantum Homomorphic Encryption Details}
\label{app:qotp}

\subsection{Pauli One-Time Pad}
\label{potp}

Pauli-OTP~\citep{Broadbent_2015} encrypts $\ket{\psi}$ as
$\ket{\psi'} = X^aZ^b\ket{\psi}$ with $a,b\in\{0,1\}^n$ and
$X^aZ^b=(X^{a_1}Z^{b_1})\otimes\cdots\otimes(X^{a_n}Z^{b_n})$. Keys are sampled
fresh per use and shared with the server only in encrypted form. Averaged over
the key the ciphertext is maximally mixed, so the scheme is
information-theoretically secure.

\subsubsection{Clifford gates}
\label{potp:clifford}

For Clifford $C$ and Pauli $P$, $CPC^\dagger\in\mathcal{P}$, so
\begin{align}
C\ket{\psi'} = C X^aZ^b\ket{\psi} = (CX^aZ^bC^\dagger)(C\ket{\psi})
            = X^{a'}Z^{b'}(C\ket{\psi}),
\end{align}
and only the classical keys change. The server updates the encrypted keys
purely classically; no client interaction is required.

\subsubsection{Non-Clifford gates}
\label{potp:non-clifford}

For the $T$ gate,
\begin{align}
T\ket{\psi'} = T(X^aZ^b\ket{\psi}) = X^aZ^{a\oplus b}P^{a}(T\ket{\psi}),
\label{eq:phaseerr}
\end{align}
so the residual error $P^{a}$ is conditioned on a secret bit. The accepted
repairs are deferred correction, $O(N_T^2)$ in the number of $T$
gates~\citep{Broadbent_2015}, or auxiliary magic states with resource cost
exponential in $T$-depth~\citep{Broadbent2009Universal}. For arbitrary
continuous rotations the Solovay--Kitaev route requires above $25{,}000$ gates
per rotation~\citep{Dawson2005SK,Ma_2022}. This is the obstruction that
\S\ref{sec:degree2-main} removes by changing coordinates rather than by paying
either price.

\subsubsection{Encrypted CNOT}
\label{ecnot}

\citet{mahadev2023} constructs an encrypted-CNOT primitive from trapdoor
claw-free functions under LWE, allowing a fully classical client: the server
prepares a superposition over function inputs, entangles it with the target and
measures the output, so the conditional flip is controlled by an encrypted bit
without decryption.

\subsection{Quaternion track}
\label{sec:QOTP}

A quaternion $q=t_1+t_2\mathbf{i}+t_3\mathbf{j}+t_4\mathbf{k}$ with
$\|t\|_2=1$ maps to a single-qubit unitary by
\begin{equation}
U(t) = t_1 I_2 + t_2(iX) + t_3(iZ) + t_4(iY),
\end{equation}
so the encrypted object is a constrained real $4$-vector, which is why an
approximate-arithmetic scheme over reals is the natural backend. On the
delegation channel the key is a unit quaternion and the masked state
$U(t)\ket{\psi}$ is a uniformly random point on the Bloch sphere. When the
server applies a rotation $V$ the key updates as
$U(t_{\text{new}}) = U(t)V^{-1}$, a degree-2 homomorphic update
(Proposition~\ref{prop:degree2}).

Two-qubit gates couple the keys,
\begin{equation}
\mathrm{CNOT}\bigl(U(t_1)\otimes U(t_2)\bigr)
= \bigl(U(t_1')\otimes U(t_2')\bigr)\mathrm{CNOT}(\ket{\psi_1\psi_2}),
\label{qotp:cnot:equation}
\end{equation}
preventing independent per-qubit updates. Following~\citet{Ma_2022} the state is
converted to the Pauli frame immediately before an entangling gate, where the
update is the XOR relabelling of Lemma~\ref{lem:entangler}, and converted back
afterwards. The conversion decomposes the encrypted angle into Euler form and
approximates each angle bit-wise, applying conditional rotations under the
encrypted-CNOT primitive; its bit-serial character is its cost, and it is
amortised across the rotation layers it enables. The frame conversions are a
constant number per entangler and are levelled, requiring no periodic carry
bootstrapping (Appendix~\ref{app:security}).

\section{Encrypted QNN Training: Detailed Construction}
\label{app:encrypted-qnn-training}

\subsection{Client side}
\label{app:framework-client}

A trainable rotation
$U(\theta,\hat u)=\cos(\theta/2)I-i\sin(\theta/2)(u_xX+u_yY+u_zZ)$ is encoded by
Eq.~\ref{eq:quat-iso-app}. In the federated path the client encrypts the four
components of the \emph{unmasked} update quaternion under the clients' shared
homomorphic keys, after principal-branch wrapping and hemisphere alignment
(\S\ref{sec:geometry}). Quaternion one-time-pad keys are a separate object used
only on the delegation channel, per client and per use, in accordance with
Lemma~\ref{lem:mask}.

\subsection{Server side}
\label{app:framework-server}

The server applies plaintext rotations by right multiplication,
$\Enc(q_{\text{new}}) = \Enc(q)\cdot V^{-1}$ at one multiplicative level, and
forms the weighted sum of Eq.~\ref{eq:agg} at depth $0$. It never decrypts and
never multiplies ciphertext by ciphertext.

\subsection{Entangling gates}
\label{app:framework-cnot}

Entangling gates are handled by the short representation switch of
\S\ref{sec:QOTP}, yielding the division of labour that makes the construction
work: continuous parameters ride the quaternion track, where composition is
degree $2$; Clifford structure rides the Pauli track, where updates are XOR
relabellings; and the two meet only at frame conversions of constant count.

\section{Experimental Specifications}
\label{app:experiments}

For each experiment we record: the claim under test, the procedure and entry
point, the explicit assumptions, and the observed result. Unless stated
otherwise, runs use the secure fixed-point multi-key stack at $12$-bit
quaternion precision, a six-qubit six-layer ansatz with a CNOT ring after each
rotation layer ($36$ trainable angles), and exact statevector simulation with
parameter-shift gradients. Cluster jobs were executed on NVIDIA T4, A40 and
A100 nodes; job identifiers are given per experiment so that logs can be matched
to results.

\paragraph{Two code paths, named once.}
Two aggregation compositions exist in the release and are used for different
purposes. The \emph{protocol path} aggregates encrypted quaternions and recovers
angles by $\mathrm{atan2}$ (chordal mean); the \emph{scale-test harness}
composes a per-client encrypted round trip with masked secure aggregation of
recovered angles using pairwise-cancelling masks and Shamir
sharing~\citep{shamir1979share,diffie1976new} for dropout tolerance (Euclidean
mean). The server decrypts in neither. Depth-0 aggregation claims refer to the
protocol path and are measured in Appendix~\ref{exp:e2}; federated learning
curves come from the harness; Appendix~\ref{app:geometry} quantifies the
difference between the two as $2.50\times10^{-6}$\,rad at the operating
dispersion.

\paragraph{Decode precision.}
Exactness measurements decode in float64. This matters: a float32 decode path
pins the residual near $10^{-7}$ relative precision and therefore cannot
represent the true CKKS aggregation error, which is four orders of magnitude
smaller. All values in Table~\ref{tab:exact} are float64 decodes.

\begin{figure}[H]
    \centering
    \begin{subfigure}[t]{0.325\textwidth}
        \centering
        \includegraphics[width=\linewidth]{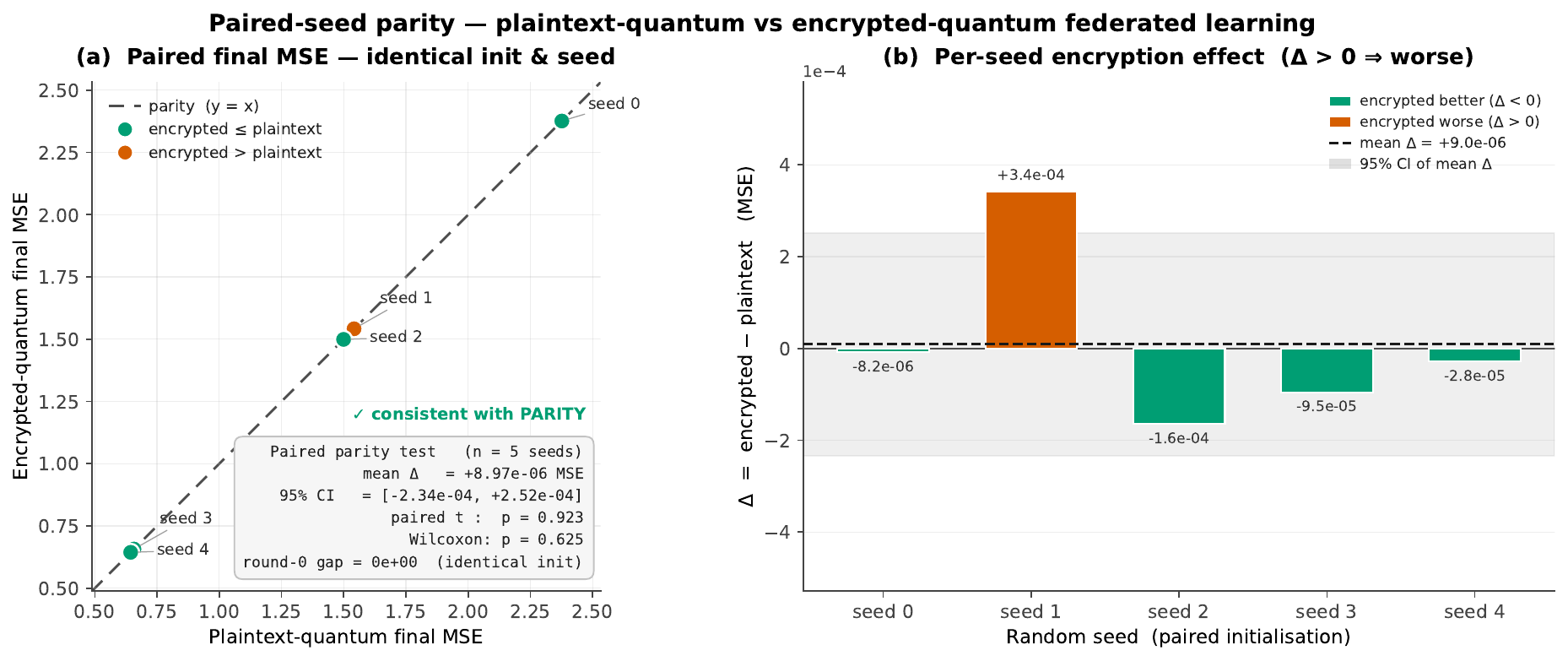}
        \caption{Paired parity: encrypted vs.\ plaintext, five seeds.}
        \label{fig:e1}
    \end{subfigure}\hfill
    \begin{subfigure}[t]{0.325\textwidth}
        \centering
        \includegraphics[width=\linewidth]{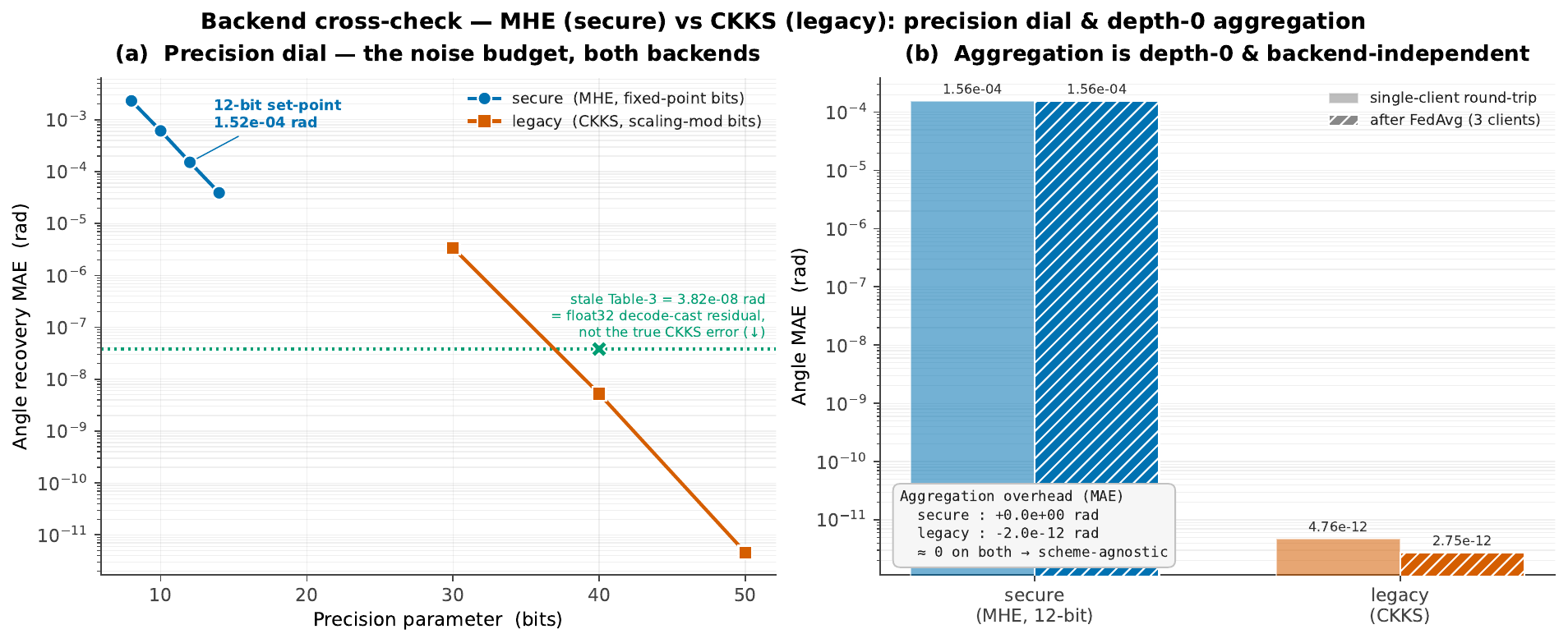}
        \caption{Precision dial and depth-0 aggregation, two backends.}
        \label{fig:e2}
    \end{subfigure}\hfill
    \begin{subfigure}[t]{0.325\textwidth}
        \centering
        \includegraphics[width=\linewidth]{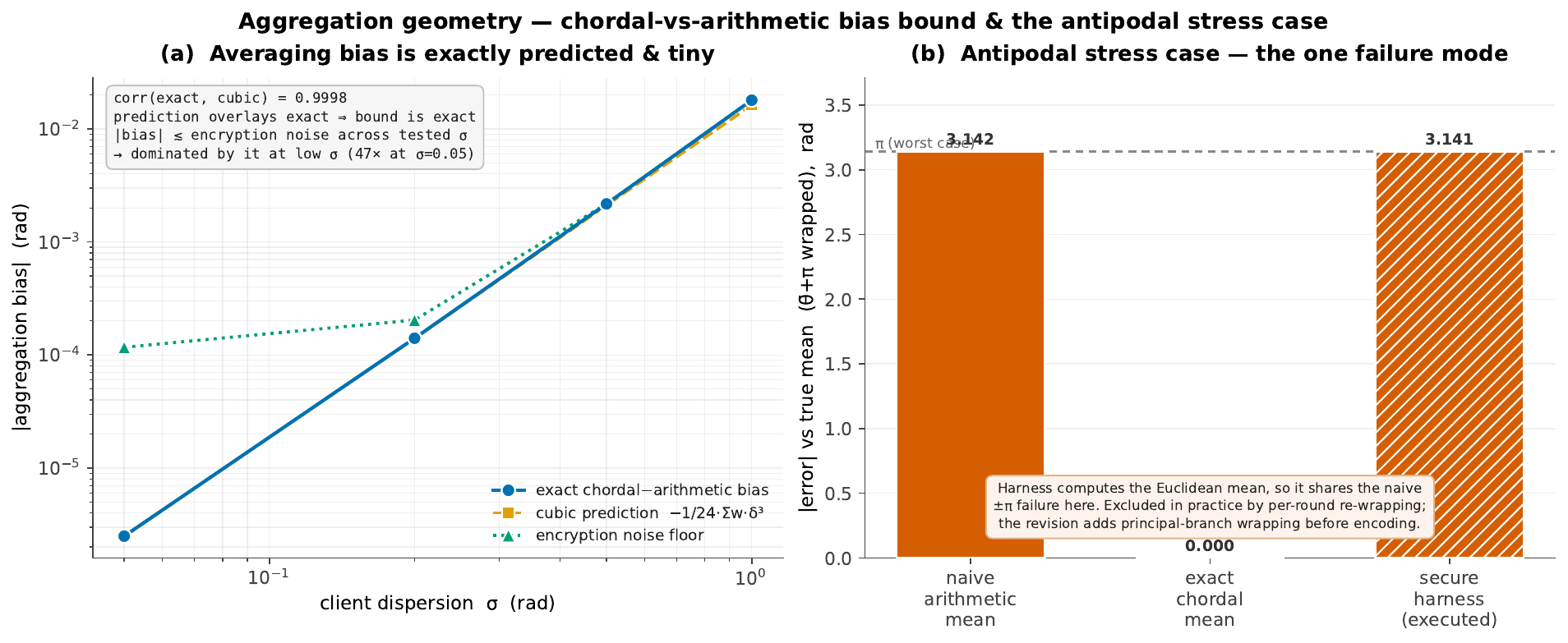}
        \caption{Aggregation bias against the cubic bound of Eq.~\ref{eq:cubic}.}
        \label{fig:e3}
    \end{subfigure}
    \caption{The three load-bearing measurements. (a)~With initialisation and
    batch order held identical, encrypted and plaintext arms are
    indistinguishable. (b)~Weighted FedAvg adds no error beyond per-client
    encoding on either backend. (c)~The realised aggregation bias is predicted by
    Eq.~\ref{eq:cubic} and sits below the encryption noise floor throughout.}
    \label{fig:results}
\end{figure}

\subsection{E1: Paired-seed parity}
\label{exp:e1}

\paragraph{Claim.} With initialisation and data order held identical, encrypted
and plaintext federated training reach statistically indistinguishable final
loss.
\paragraph{Procedure.} For each seed in $\{0,1,2,3,4\}$, reconfigure
reproducibility and rebuild the data loaders \emph{immediately before each arm},
then run the plaintext and encrypted arms in turn; five federated rounds, three
clients, secure stack, $12$-bit. Report per-seed final validation MSE, a paired
$t$-test, a Wilcoxon signed-rank test and a $95\%$ confidence interval.
\paragraph{Assumptions and controls.} Pairing is verified rather than assumed:
arms that begin identically must have equal round-0 loss. The maximum round-0
gap across seeds is $0.00\times10^{0}$. Exact noiseless simulation is used
deliberately, since injecting device noise would confound the quantity being
isolated.
\paragraph{Result.} Plaintext $1.3436\pm0.7228$, encrypted $1.3436\pm0.7229$;
per-seed differences $\{-8.2\times10^{-6}, +3.4\times10^{-4}, -1.7\times10^{-4},
-9.5\times10^{-5}, -2.8\times10^{-5}\}$; mean $+9.0\times10^{-6}$, $95\%$ CI
$[-2.3\times10^{-4}, +2.5\times10^{-4}]$; $t=0.103$, $p=0.923$; Wilcoxon
$W=5.0$, $p=0.625$. Job \texttt{265819}. The between-seed spread of final MSE is
$\approx0.72$ (individual finals range $0.64$ to $2.38$), which is why single-run
comparisons in this regime are uninformative about the encryption effect and why
the paired design is necessary rather than merely tidy.

\begin{table}[H]
\caption{Utility under encryption. Five paired seeds, five federated rounds,
three clients; final validation MSE (mean $\pm$ s.d.). Pairing verified: maximum
round-0 gap $=0.00\times10^{0}$ across all seeds. Rows 1--2 isolate the
encryption effect; rows 2--3 the noise-budget effect.}
\centering
\footnotesize
\begin{tabular}{@{}lccc@{}}
\toprule
\textbf{Arm} & \textbf{Final MSE} & \textbf{Paired $\Delta$ (95\% CI)} & \textbf{Paired $t$ / Wilcoxon} \\
\midrule
Plaintext                   & $1.3436\pm0.7228$ & -- & -- \\
Encrypted, 12-bit           & $1.3436\pm0.7229$ & $+9.0{\times}10^{-6}\;[-2.3,+2.5]{\times}10^{-4}$ & $p{=}0.923$ / $p{=}0.625$ \\
Encrypted, 8-bit ($16\times$ noise) & $1.3442\pm0.7230$ & $+6.0{\times}10^{-4}\;[-2.6,+3.8]{\times}10^{-3}$ & $p{=}0.636$ / $p{=}1.0$ \\
\bottomrule
\end{tabular}
\end{table}

\subsection{E2: Backend cross-check and the noise budget}
\label{exp:e2}

\paragraph{Claim.} Weighted FedAvg over encrypted quaternions adds no error
beyond per-client encoding, on cryptographically distinct backends; and the
encryption noise floor is a tunable dial.
\paragraph{Procedure.} Encrypt three clients' $36$-angle vectors, aggregate
homomorphically, decrypt, and measure single-client versus post-aggregation
angle MAE; the difference is the aggregation overhead. Repeat on the secure
multi-key stack and on OpenFHE CKKS (ring $2^{16}$, depth $40$, $50$-bit
scaling, \texttt{FLEXIBLEAUTO}, \texttt{HYBRID} key switching). Sweep precision
($\texttt{qotp\_bits}\in\{8,10,12,14\}$; CKKS scaling modulus
$\in\{30,40,50\}$), $50$ repetitions per point, seed $42$, float64 decode.
\paragraph{Assumptions.} Both stacks are genuine cryptographic engines rather
than mocks; the CKKS arm performs real context generation and is memory-heavy.
\paragraph{Result.} Table~\ref{tab:exact}. Aggregation overhead
$0.0$ (secure) and $-2.0089\times10^{-12}$\,rad (CKKS); the negative sign is
independent rounding errors averaging down. Job \texttt{265860}.

\begin{table}[H]
\caption{Depth-0 aggregation is exact on both backends. The difference between
the two middle columns is the overhead Proposition~\ref{prop:degree2}(iii)
predicts to be zero. Float64 decode, three clients, $36$ angles, radians;
precision sweep in Table~\ref{tab:precision}.}
\centering
\small
\begin{tabular}{@{}lccc@{}}
\toprule
\textbf{Backend} & \textbf{Single client} & \textbf{Post-FedAvg} & \textbf{Aggregation overhead} \\
\midrule
Fixed-point multi-key (12-bit) & $1.5632\times10^{-4}$ & $1.5632\times10^{-4}$ & $\mathbf{0.0}$ \\
OpenFHE CKKS (50-bit modulus)  & $4.7573\times10^{-12}$ & $2.7484\times10^{-12}$ & $\mathbf{-2.0\times10^{-12}}$ \\
\bottomrule
\end{tabular}
\end{table}

\begin{table}[H]
\caption{Encryption noise budget as a function of the precision knob, float64
decode, $50$ repetitions per point. Angle MAE in radians.}
\label{tab:precision}
\centering
\small
\begin{tabular}{@{}lcccc@{}}
\toprule
Multi-key, precision bits & 8 & 10 & 12 & 14 \\
Angle MAE (rad) & $2.31\times10^{-3}$ & $6.11\times10^{-4}$ & $1.52\times10^{-4}$ & $3.91\times10^{-5}$ \\
\midrule
CKKS, scaling modulus & 30 & 40 & 50 & -- \\
Angle MAE (rad) & $3.39\times10^{-6}$ & $5.27\times10^{-9}$ & $4.58\times10^{-12}$ & -- \\
\bottomrule
\end{tabular}
\end{table}

\subsection{E3: Aggregation-geometry stress test}
\label{exp:e3}

\paragraph{Claim.} The chordal-versus-Euclidean aggregation bias is predicted by
Eq.~\ref{eq:cubic}, is negligible in the operating regime, and fails only in a
constructed branch-cut configuration that the protocol excludes.
\paragraph{Procedure.} Five clients, $36$ angles, dispersions
$\sigma\in\{0.05,0.2,0.5,1.0\}$\,rad, five repetitions each, through the secure
round-trip and aggregation primitives; plus a near-antipodal case with two
groups at $\bar\theta\pm(\pi-0.05)$. Seed $42$.
\paragraph{Result.} Table~\ref{tab:e3}; antipodal case $3.1416$\,rad (arithmetic
mean), $0.0000$\,rad (chordal mean), $3.1414$\,rad (harness). The run is
deterministic: an independent re-execution on different hardware reproduces
every reported digit.

\subsection{E4: Parameterised entangler through the encrypted pipeline}
\label{exp:e4}

\paragraph{Claim.} $R_{ZZ}(\theta)$ compiles exactly onto the supported gate set
and traverses the encrypted pipeline within the shot-noise floor, at constant
gate overhead.
\paragraph{Procedure.} Execute
$R_{ZZ}(\theta)=\mathrm{CNOT}\cdot(I\otimes R_Z(\theta))\cdot\mathrm{CNOT}$
end to end through the encrypted stack -- encrypted state preparation,
encrypted CNOT, encrypted rotation, encrypted CNOT, decrypt and run -- for $20$
angles spanning $[0,\pi]$ at $4096$ shots, recovering $\theta$ from the measured
distribution. Seed $42$, $12$-bit precision.
\paragraph{Result.} Mean recovered-angle error $8.33\times10^{-3}$\,rad, maximum
$2.06\times10^{-2}$\,rad, against a sampling standard deviation of
$1.56\times10^{-2}$\,rad and an expected mean absolute error of
$1.25\times10^{-2}$\,rad under pure sampling noise; worst trial at $1.3$
standard deviations. Exactly two CNOTs per trial, independent of $\theta$;
disentangling sanity check (second qubit returns to $\ket{0}$) passes with
probability $1.000$ in all $20$ trials. Timing: encryption $6.7$\,ms,
decryption $63.6$\,ms, encrypted evaluation $\approx3.06$\,h per trial, i.e.\
$\approx1.5$\,h per encrypted CNOT.

\begin{figure}[H]
    \centering
    \includegraphics[width=0.8\linewidth]{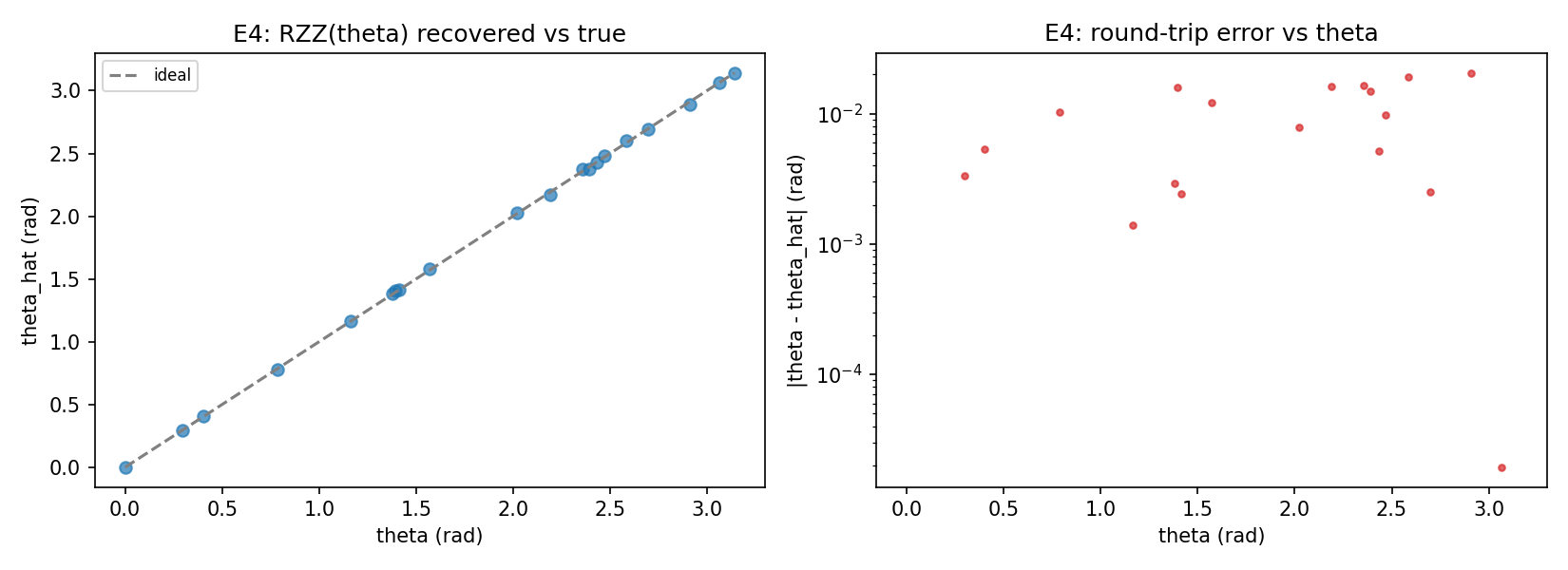}
    \caption{E4: recovered versus true $\theta$ for $R_{ZZ}(\theta)$ through the
    encrypted pipeline, with the shot-noise band. The encrypted entangler adds
    no angle error resolvable at $4096$ shots.}
    \label{fig:e4}
\end{figure}

\subsection{E5: Noise-budget falsification}
\label{exp:e5}

\paragraph{Claim.} Encryption noise does not act as a beneficial regulariser.
\paragraph{Procedure.} Re-run the encrypted arm at $8$-bit precision -- a
$16\times$ larger quantisation step -- using E1's per-seed methodology so that
each $8$-bit run starts from identical weights and data order to the
corresponding $12$-bit run. The $12$-bit values are read from E1's result file
rather than re-measured, so the comparison is genuinely paired.
\paragraph{Result.} $8$-bit $1.3442\pm0.7230$ against $12$-bit
$1.3436\pm0.7229$; per-seed differences $-0.0003$, $+0.0048$, $-0.0016$,
$-0.0012$, $+0.0013$; mean $+0.0006$, $95\%$ CI $[-0.0026,+0.0038]$;
$t=0.511$, $p=0.636$; Wilcoxon $W=7.0$, $p=1.0$. The
prediction implied by the regulariser reading -- more noise, more benefit --
does not hold. Job \texttt{266019}.

\begin{figure}[H]
    \centering
    \includegraphics[width=0.8\linewidth]{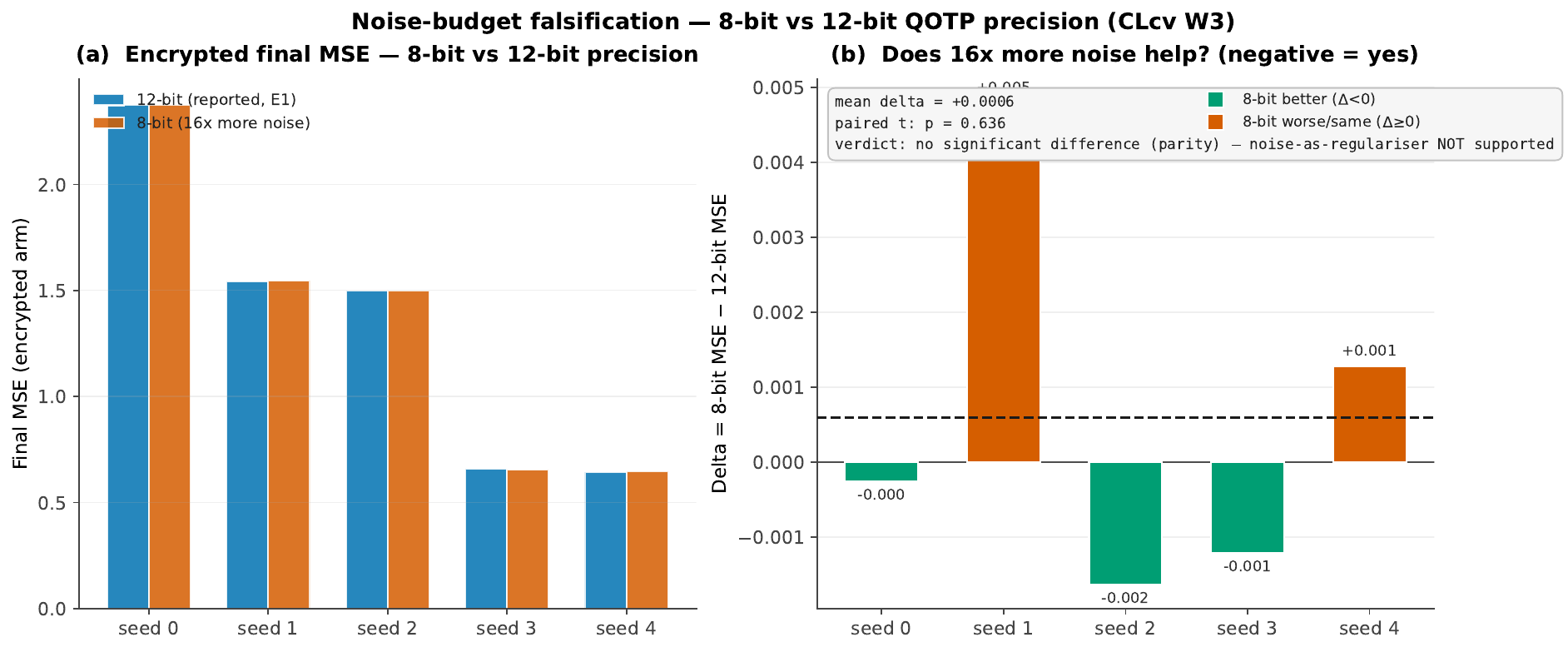}
    \caption{E5: $8$-bit versus $12$-bit encrypted arms, paired by seed. A
    $16\times$ increase in quantisation noise produces no detectable change in
    final loss.}
    \label{fig:e5}
\end{figure}

\subsection{E6: Cross-dataset scale study}
\label{exp:e6}

\paragraph{Claim.} Client scaling, horizon and precision--bandwidth behaviour
are not artefacts of one dataset or of the originally tested client range.
\paragraph{Procedure.} Three sub-sweeps -- client scaling
$K\in\{3,5,10,20\}$ at $10$ rounds; horizon $\in\{20,35,50\}$ rounds; precision
$\in\{8,10,12,14\}$ bits -- run identically on California
Housing~\citep{pace1997sparse} and Wine Quality~\citep{cortez2009modeling}, five
seeds per point, secure stack, $12$-bit unless swept.
\paragraph{Result.} Table~\ref{tab:scale} and Table~\ref{tab:bandwidth}. Growth
with client count is mild and monotone ($5.6\%$ and $7.3\%$ for a $6.7\times$
increase in participants); longer horizons improve both the mean and the seed
spread; the precision--bandwidth relation is identical across datasets, as a
protocol-level property should be. Jobs \texttt{265935}, \texttt{266315}
(California, split across two submissions after an unrelated cluster filesystem
failure; configuration parity between the two is recorded in the result
metadata) and \texttt{265936} (Wine, single job).

\begin{table}[H]
\caption{Cross-dataset scale study, final validation MSE (mean $\pm$ s.d.\ over
five seeds), secure stack at $12$-bit precision. Client scaling at $10$
federated rounds; horizon sweep at three clients.}
\label{tab:scale}
\centering
\small
\begin{tabular}{@{}lcccc|ccc@{}}
\toprule
& \multicolumn{4}{c|}{\textbf{Clients $K$}} & \multicolumn{3}{c}{\textbf{Rounds}} \\
\textbf{Dataset} & $3$ & $5$ & $10$ & $20$ & $20$ & $35$ & $50$ \\
\midrule
California Housing & $1.3035$ & $1.3166$ & $1.3583$ & $1.3763$ & $1.2285$ & $1.1395$ & $1.0831$ \\
 & \scriptsize$\pm0.614$ & \scriptsize$\pm0.633$ & \scriptsize$\pm0.659$ & \scriptsize$\pm0.670$ & \scriptsize$\pm0.541$ & \scriptsize$\pm0.444$ & \scriptsize$\pm0.374$ \\
Wine Quality & $1.4214$ & $1.4641$ & $1.5034$ & $1.5246$ & $1.3148$ & $1.1963$ & $1.1275$ \\
 & \scriptsize$\pm0.242$ & \scriptsize$\pm0.249$ & \scriptsize$\pm0.254$ & \scriptsize$\pm0.255$ & \scriptsize$\pm0.205$ & \scriptsize$\pm0.141$ & \scriptsize$\pm0.095$ \\
\bottomrule
\end{tabular}
\end{table}

\begin{table}[H]
\caption{Precision, ciphertext size, reconstruction error and total protocol
traffic over a $50$-round, three-client schedule. Traffic and ciphertext size are
dataset-independent by construction and reproduce identically on both datasets.
Angle MAE here is measured on $256$ probe angles; the independent float64
measurement of Table~\ref{tab:exact} gives $1.52\times10^{-4}$ at $12$ bits,
consistent within the probe-set difference.}
\label{tab:bandwidth}
\centering
\small
\begin{tabular}{@{}lcccc@{}}
\toprule
Precision bits & 8 & 10 & 12 & 14 \\
\midrule
Ciphertext size & $100.0$\,KB & $125.0$\,KB & $150.0$\,KB & $175.0$\,KB \\
Angle MAE (rad) & $2.600\times10^{-3}$ & $6.585\times10^{-4}$ & $1.676\times10^{-4}$ & $4.051\times10^{-5}$ \\
Total traffic & $4{,}375$\,MB & $5{,}430$\,MB & $6{,}485$\,MB & $7{,}539$\,MB \\
\bottomrule
\end{tabular}
\end{table}

\begin{figure}[H]
    \centering
    \begin{subfigure}[t]{0.48\textwidth}
        \centering
        \includegraphics[width=\linewidth]{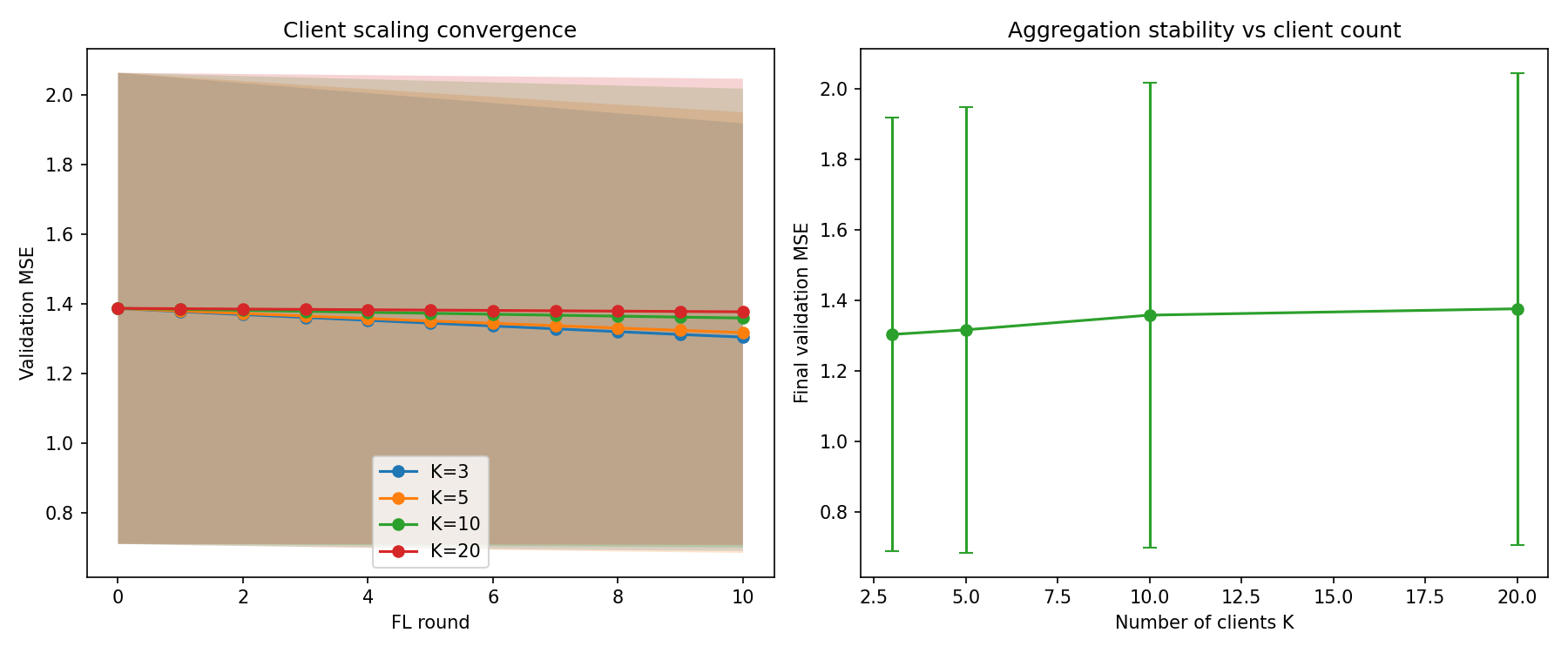}
        \caption{California Housing, $K\in\{3,5,10,20\}$.}
    \end{subfigure}\hfill
    \begin{subfigure}[t]{0.48\textwidth}
        \centering
        \includegraphics[width=\linewidth]{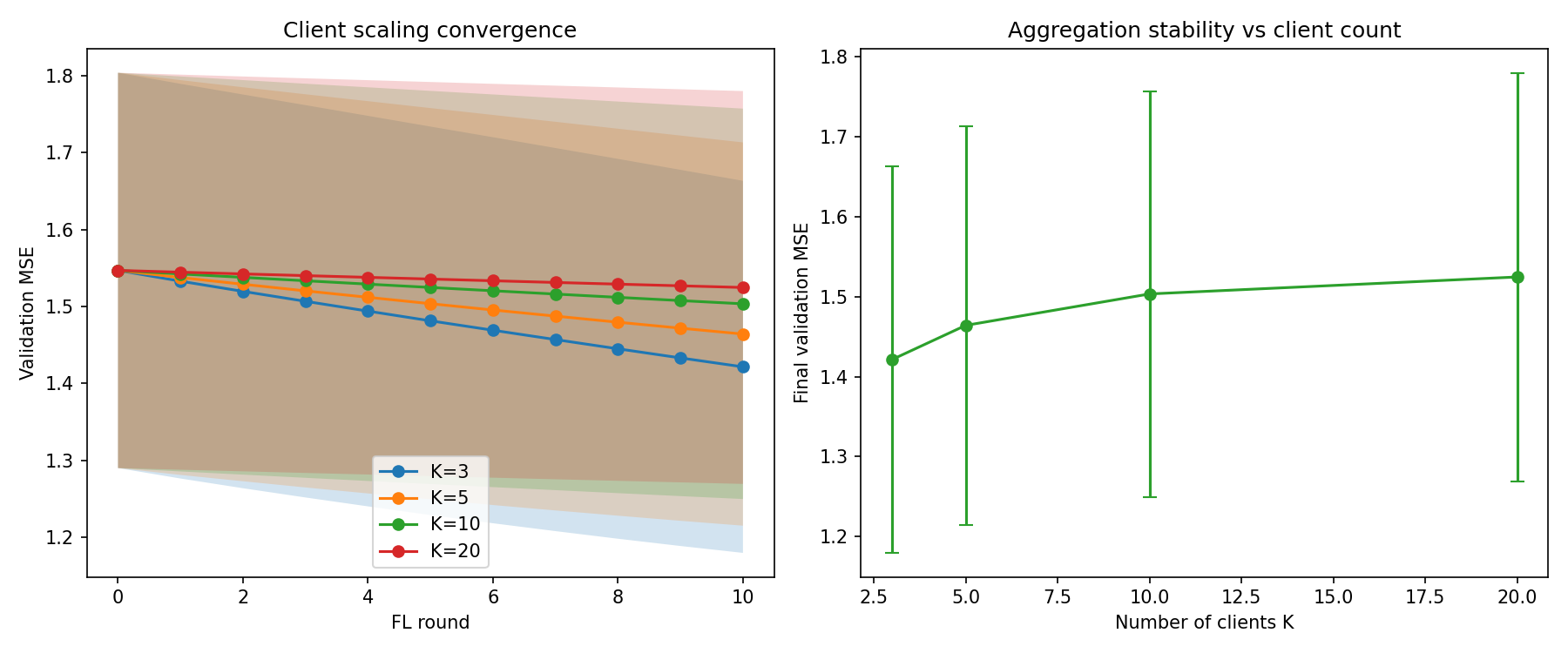}
        \caption{Wine Quality, $K\in\{3,5,10,20\}$.}
    \end{subfigure}
    \caption{E6: client scaling replicates on an unrelated dataset, with no
    aggregation blow-up at $20$ clients.}
    \label{fig:e6}
\end{figure}

\subsection{Hardware experiments}
\label{exp:ablation1}

\paragraph{Claim.} The encrypt $\to$ rotate $\to$ decrypt round trip preserves
single-qubit circuit behaviour on physical hardware, and the encryption layer's
own contribution is below the device error floor.
\paragraph{Procedure.} Sample rotation angles; for each trial run encrypted
state preparation, an encrypted $R_x(\theta)$, and decrypt-and-run; compute
circuit-level fidelity from measured $Z$-basis counts against the analytic
reference; separately measure a homomorphic-layer-only round trip to isolate
arithmetic noise. The matched unencrypted control repeats the identical seeded
sampling scheme and shot budget ($100$ angles, $4096$ shots) on the same device
with a calibration snapshot recorded alongside.
\paragraph{Assumptions.} The $Z$-basis estimator is a lower bound on state
agreement. Simulator and hardware tracks use different shot budgets and are
reported separately.
\paragraph{Result.} Encrypted $F=0.9918$ on \texttt{ibm\_fez}; matched
unencrypted control $F=0.99957$ (s.d.\ $0.00107$, min $0.99312$); simulator
$F=0.9989$; homomorphic-layer-only angle MAE $1.55\times10^{-4}$\,rad. The
difference between encrypted and control, $0.0078$, is comparable to the
device's median readout error at the time ($0.0083$; two-qubit gate error
$0.0029$).

\begin{figure}[H]
    \centering
    \begin{minipage}[b]{0.48\linewidth}
        \centering
        \includegraphics[width=\linewidth]{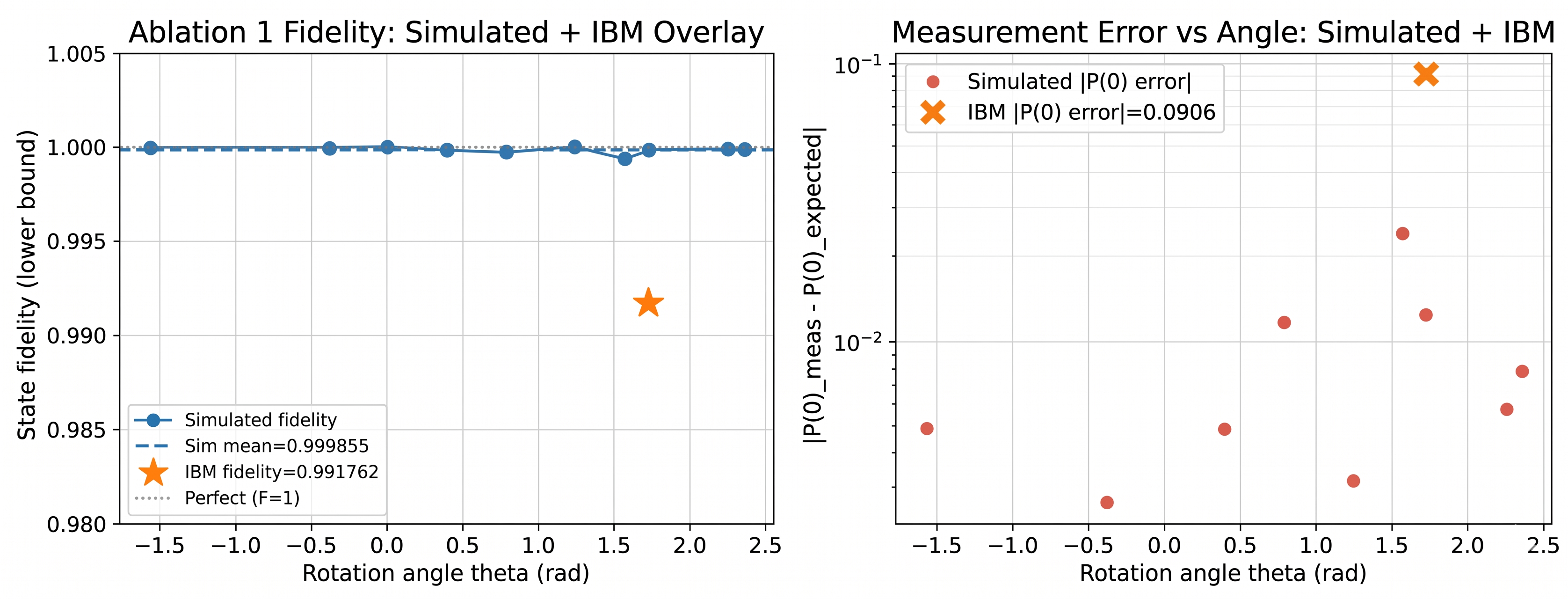}
        \caption{Round-trip state fidelity for $R_x(\theta)\ket{0}$ under
        encrypt $\to$ rotate $\to$ decrypt, simulator and \texttt{ibm\_fez}.}
        \label{fig:fid-primary}
    \end{minipage}\hfill
    \begin{minipage}[b]{0.48\linewidth}
        \centering
        \includegraphics[width=\linewidth]{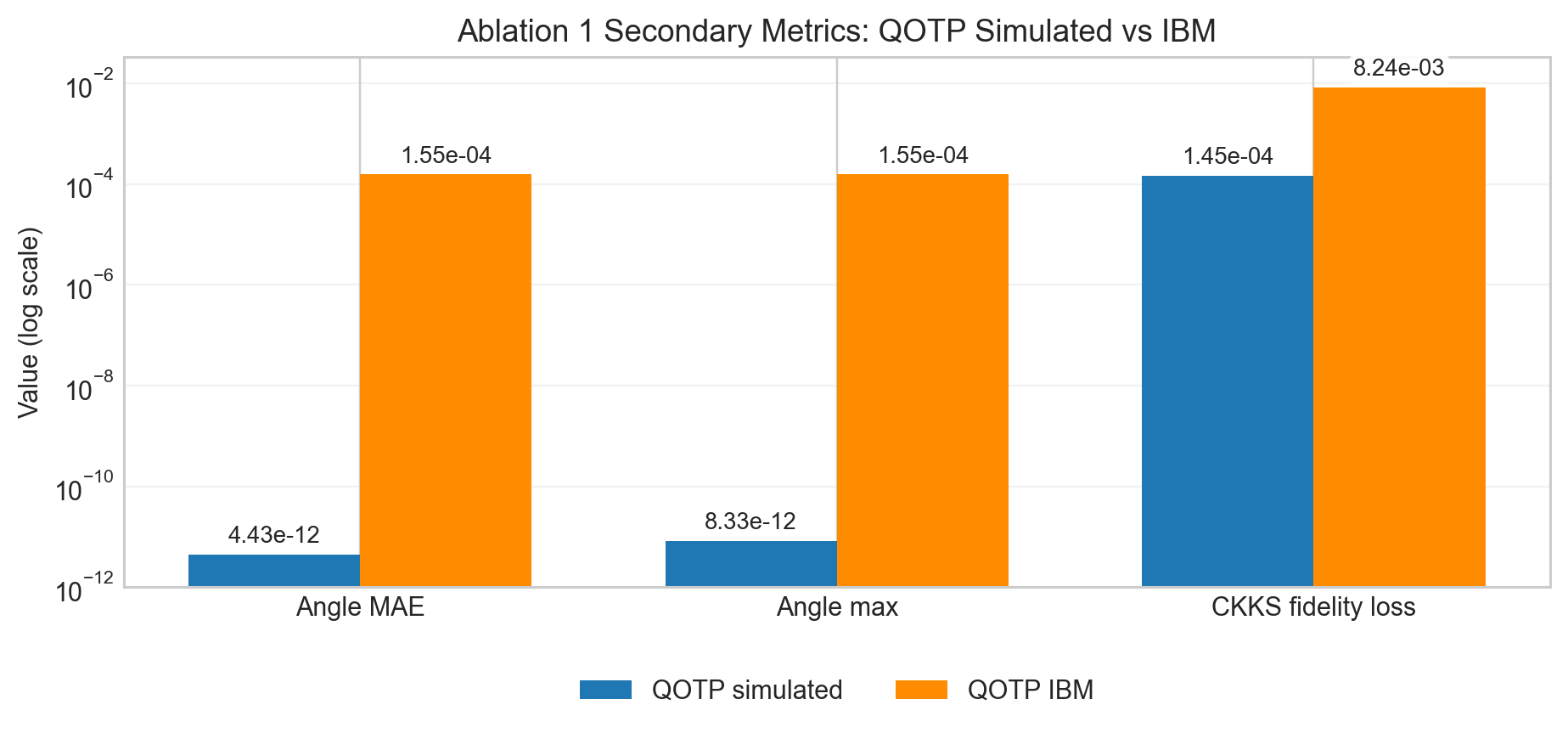}
        \caption{Homomorphic-layer angle-recovery error for the same round trip;
        an order of magnitude below the device gate-error floor.}
        \label{fig:fid-secondary}
    \end{minipage}
\end{figure}

\subsection{Supporting ablations}

\paragraph{Ablation 2: single-client utility preservation.}\label{exp:ablation2}
Round-tripping a trained client's angles through encryption and re-evaluating
leaves validation MSE unchanged in the reported run -- the single-client
special case of E1.

\paragraph{Ablation 3: multi-round error accumulation and runtime.}\label{exp:ablation3}
Repeated round trips across local training rounds show no observable angle
drift; the encryption stage dominates per-round runtime
($\approx167$\,s against $\approx76$\,s of local training).

\paragraph{Ablation 4: homomorphic aggregation error.}\label{exp:ablation4}
Superseded, and made precise, by E2: see Table~\ref{tab:exact}.

\paragraph{Ablation 5: encryption scope versus utility.}\label{exp:ablation5}
Three matched conditions (plaintext, quaternion-only, full homomorphic) under
one federated scaffold. Single-seed differences between conditions fall well
inside the between-seed spread measured in E1 and are therefore not
interpretable as treatment effects; the paired design of E1 supersedes this
comparison.

\paragraph{Ablation 6: precision sweep.}\label{exp:ablation6}
Error decreases sharply with precision and runtime increases; the two-backend
version is E2.

\paragraph{Ablation 7: secure-refresh frequency.}\label{exp:ablation7}
Sweeping refresh interval $k\in\{1,2,5,10\}$ against a no-refresh baseline, the
no-refresh configuration is best in the current implementation
($\mathrm{MAE}=1.24\times10^{-4}$\,rad against $2.7$--$4.3$\,rad for periodic
cadences), which is why the protocol aggregates in a single pass. This
characterises the present secure-runtime implementation, not all homomorphic
backends.

\section{Cost Models}
\label{app:cost-models}

\subsection{Hardware profile}\label{exp:costmodel1}
Calibration timings for the deployed workflow:
$T_{\text{rot,qotp}}\approx8856.5$\,ms against
$T_{\text{rot,pauli}}\approx11844.3$\,ms;
$T_{\text{to,qotp}}\approx5487.5$\,ms against
$T_{\text{to,pauli}}\approx9221.1$\,ms; analytic crossover
$p^\star\approx0.970$. Descriptive profiling rather than a controlled sweep.

\subsection{Interaction versus bandwidth}\label{exp:costmodel5}
Protocol accounting over $50$ federated rounds and three clients gives $50$
sequential exchanges in total for this protocol against $1{,}850$ (Pauli-OTP)
and $3{,}650$ (blind delegation), at $6.33$\,GB total traffic against
$\approx156$\,MB. Table~\ref{tab:crossover} converts both into wall-clock terms.
Fixed ciphertext size and message accounting are assumed; totals are model-based
rather than network traces.

\begin{table}[H]
\caption{Where non-interactivity pays. Left: time to move this protocol's
$45.3$\,MB per federated round. Right: the latency floor of interactive
baselines at $37$--$73$ sequential exchanges, before any payload moves.}
\label{tab:crossover}
\centering
\small
\begin{tabular}{@{}lcc@{}}
\toprule
\textbf{Link} & \textbf{This protocol (payload time)} & \textbf{Interactive baselines (latency floor)} \\
\midrule
$1$\,Gbps, $50$\,ms RTT   & $0.36$\,s & $1.9$--$3.7$\,s \\
$100$\,Mbps, $50$\,ms RTT & $3.6$\,s  & $1.9$--$3.7$\,s \\
$10$\,Mbps, $50$\,ms RTT  & $36$\,s   & $1.9$--$3.7$\,s \\
any link, $1$\,ms RTT     & as above  & $0.04$--$0.07$\,s \\
\bottomrule
\end{tabular}
\end{table}

\subsection{Solovay--Kitaev comparison}\label{exp:costmodel6}
Per-rotation compute reduction against Solovay--Kitaev decomposition, from gate
counts at matched target precision: $10\times$ at $\epsilon=10^{-2}$,
$20\times$ at $10^{-4}$, $30\times$ at $10^{-6}$. This is an accounting estimate
over gate counts, not an end-to-end wall-clock measurement, and should not be
read as one.

\subsection{Pure-quantum diagnostic}\label{purequantum}
A four-qubit data-reuploading circuit with only minimal classical output
calibration~\citep{Ballester2025} fails to converge under the same encrypted
federated pipeline: validation MSE rises monotonically from $1.24$ to $1.56$
across five rounds. Together with Ablation~7 this makes the hybrid co-design an
architectural necessity rather than a convenience: classical components carry
the gradient signal through landscapes a small variational circuit cannot
navigate~\citep{mcclean2018barren}, and single-pass aggregation avoids
mid-training key refreshes that current secure runtimes handle poorly.

\section{Scale Tests}
\label{app:scale-models}

The scale suite (client scaling~\ref{exp:scale2}, horizon~\ref{exp:scale3},
bandwidth~\ref{exp:scale4}) is reported at full breadth in E6; this section
records the original single-dataset sweeps for completeness.

\paragraph{Multi-seed variance.}\label{exp:scale1}
Five seeds, ten rounds, three clients. This sweep runs the two arms sequentially
over shared loaders, so the arms differ in initialisation and batch order as
well as in encryption; the resulting apparent gap is initialisation variance and
is superseded by the paired design of E1, which measures the encryption effect
at $+9\times10^{-6}$ MSE against a between-seed spread of $\approx0.72$.

\paragraph{Client scaling.}\label{exp:scale2}
Final MSE grows mildly and monotonically with participant count; tail standard
deviation \emph{improves} at larger $K$ ($0.0178\to0.0061$), i.e.\ late-round
behaviour is smoother with more clients. Extended to $K=20$ on two datasets in
Table~\ref{tab:scale}.

\paragraph{Extended horizon.}\label{exp:scale3}
Final MSE decreases monotonically with horizon and cross-seed spread contracts,
on both datasets (Table~\ref{tab:scale}).

\paragraph{Bandwidth sensitivity.}\label{exp:scale4}
Monotonic trade of angle error against traffic (Table~\ref{tab:bandwidth}).
Communication totals are protocol byte accounting, not network traces.

\section{Implementation and Security Profiles}
\label{app:security}

\paragraph{Cryptographic backends.}
The executed stack implements the GSW-style LWE constructions of
\citet[Schemes 5.1 and 5.2]{Ma_2022} over fixed-point bit-level arithmetic on
the quaternion components $t=(t_1,t_2,t_3,t_4)$ and Euler angles
$(\alpha,\beta,\gamma)$, combined with pairwise Diffie--Hellman
masking~\citep{diffie1976new} and information-theoretic mask cancellation for
the secure-aggregation harness. A parallel CKKS path over
OpenFHE~\citep{badawi2022openfhe} (ring dimension $2^{16}$, multiplicative depth
$40$, $50$-bit scaling modulus, \texttt{FLEXIBLEAUTO}, \texttt{HYBRID} key
switching) is used for the backend cross-check of E2. Security labelling should
therefore track the path: the executed stack is LWE-based, the CKKS
instantiation RLWE-based~\citep{regev2009lattices,lyubashevsky2010ideal}.

\begin{table}[H]
\caption{Threat model. Decryption capability is shared across clients and
withheld from the server (the aggregation object exposes only add, scale and
serialise). Threshold key generation is the drop-in hardening for the peer row;
see Appendix~\ref{app:security}.}
\label{tab:threat}
\centering
\small
\begin{tabular}{@{}p{2.5cm}p{4.3cm}p{5.6cm}@{}}
\toprule
\textbf{Party} & \textbf{Observes} & \textbf{Does not obtain} \\
\midrule
Cloud QPU & Quaternion-masked states & The underlying state. Masking is
information-theoretic with per-client, per-use keys removed by their owner. \\
Aggregation server & Ciphertexts, message sizes, client weights $n_k/n$ & Any
plaintext. Decryption raises an error by construction. \\
Peer client & The decrypted global aggregate & Nothing further \emph{by
protocol}; under the shared-key model a peer holding another's ciphertext could
decrypt it, the standard shared-key FHE-FedAvg assumption. \\
Network adversary & Authenticated ciphertext traffic & Plaintexts; undetected
tampering. \\
\bottomrule
\end{tabular}
\end{table}

\paragraph{Parameter profiles, and what the reported runs are entitled to claim.}
The default profile is a structural-verification setting ($n=1$, $q=256$, $m=8$,
$\beta_{\text{init}}=4$), which satisfies the noise bound
$\beta\geqslant2\sqrt{n}$ of~\citet{Ma_2022} and is chosen for auditability, not for
security level. Production profiles ($n=64$, $q=2^{32}$,
$\beta_{\text{init}}=16$; and $n=256$, $\beta_{\text{init}}=32$) are available.
\emph{No security property is claimed for the reported runs}: they establish
protocol structure, exactness and depth, and the security argument rests on the
scheme, not on the profile used for verification.

\paragraph{Angle extraction and the entangler path.}
Euler-angle extraction from $\Enc(t)$ uses CORDIC vectoring in fixed point,
fully oblivious (zero scalar decryptions), with no singularity at the origin and
precision controlled by $\max(12,\min(24,\text{bits}+2))$ iterations. The
encrypted-rotation pipeline is a levelled circuit whose level budget is
$3kL+1$ for $k$ precision bits and $L$ circuit levels ($37$ levels at the
default $k=12$, $L=1$); level promotion refreshes ciphertexts as needed and no
periodic carry bootstrapping is required. The Boolean sub-layer is where the
$\approx1.5$\,h per encrypted CNOT of \S\ref{sec:exp-entangler} is spent; a
Boolean-native backend~\citep{chillotti2020tfhe} is the natural remedy.

\begin{figure}[H]
  \centering
  \includegraphics[width=\linewidth]{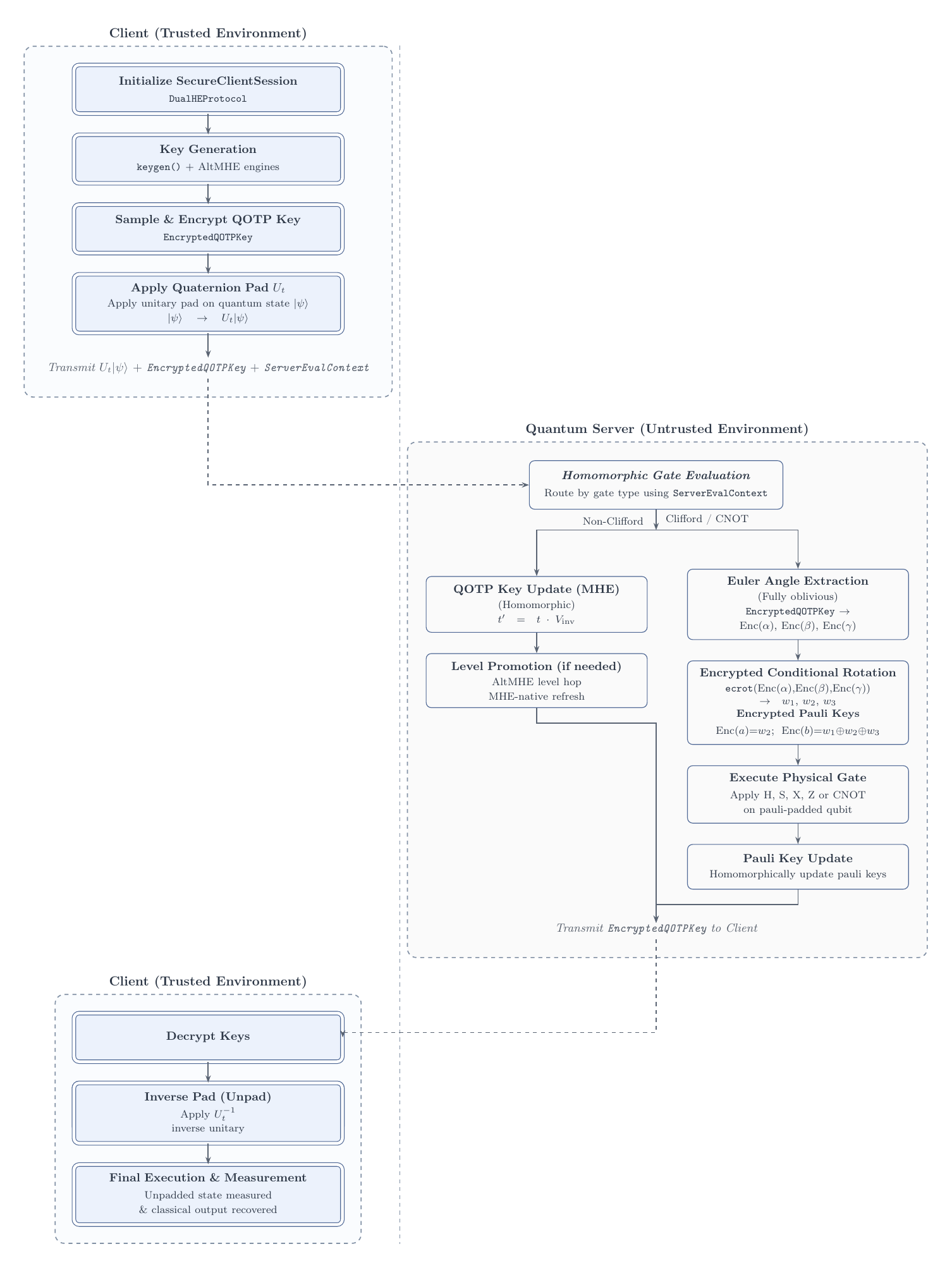}
  \caption{End-to-end implementation flow for the quaternion track.}
  \label{fig:flow}
\end{figure}

\section{Extended Discussion and Limitations}
\label{app:discussion-limitations}

\subsection{Bandwidth is the new bottleneck}

The dominant cost moves from sequential round-trips to per-round bandwidth:
$6.33$\,GB over a $50$-round run, driven by $22.67$\,MB ciphertexts per client
per round, with single-round encryption ($\approx167$\,s) exceeding local
plaintext training ($\approx76$\,s). The crossover analysis of
\S\ref{sec:comm} makes the boundary explicit rather than rhetorical: the trade
pays above roughly $100$\,Mbps at wide-area latency and does not pay below it or
on low-latency links. Precision recovers about a third of the traffic at no
measured utility cost (\S\ref{sec:exp-parity}), and ciphertext compression is
the obvious next lever.

\subsection{Security, in two layers}

The quantum layer provides information-theoretic security at the state level;
the classical layer provides computational security under (Ring-)LWE. The
system-level guarantee is bounded by the weaker link, so a future
polynomial-time attack on the lattice assumption would compromise aggregation
even though quantum ciphertexts remain opaque; practitioners should size
parameters to their threat model. Decryption capability is shared across
clients, which is the standard shared-key homomorphic-FedAvg assumption and the
one place where a peer is more powerful than the protocol's spirit suggests;
threshold key generation~\citep{mouchet2021multiparty} removes this at the cost
of an additional key-setup phase.

\subsection{Robustness against malicious clients}

A blind server cannot inspect updates, so poisoned updates are undetectable by
construction -- the standard Byzantine pathology, sharpened by encryption.
Robust aggregation compatible with the homomorphic layer, along the lines
of~\citet{ElMaouaki2025RobQFL}, is the natural next component. Until then the
protocol suits institutionally vetted federations where client identity is
verified.

\subsection{Implementation scope}

The core protocol path is implemented and validated end to end; full
application-layer integration into a production federated
stack~\citep{beutel2020flower} is incomplete, and several planned cost-model
studies (evaluation time versus CNOT fraction, depth scaling at fixed CNOT
fraction, gate ordering) were not executed. Hardware validation is constrained
by quantum-cloud budgets: at prevailing runtime pricing~\citep{IBMQuantumProducts}
multi-seed hardware sweeps at the scale conventional in classical ML are
structurally out of reach for most groups. We record this as a limitation of the
evidence, and separately as an open question for the community about whether
empirical norms calibrated for classical compute are the right ones to apply to
hardware-constrained research.

\section{Broader Impact}
\label{app:broader-impact}

\paragraph{Privacy.} The construction strengthens confidentiality in federated
training: even an adversary who later breaks the classical layer obtains no
transcript of the delegated quantum computation, which existed only as transient
mixed states. Organisations should nonetheless evaluate whether the residual
lattice-based classical-layer security matches their threat model; the classical
scheme is replaceable without touching the reduction.

\paragraph{Auditability.} Blindness prevents compliance auditing as well as
surveillance. In jurisdictions requiring model inspection or data-access rights,
the protocol's opacity may conflict with existing obligations, and
selective-disclosure mechanisms compatible with the security model are a
worthwhile line of work.

\paragraph{Adversarial robustness.} Under the honest-but-curious model a
minority of malicious clients could degrade the global model with no audit
signal. Deployment in adversarial environments should wait for
homomorphic-compatible robust aggregation.

\paragraph{Access and equity.} Physical validation required institutional
quantum-cloud and HPC access. As the approach matures, benefits risk accruing to
already well-resourced actors. The bandwidth profile compounds this: community
clinics, mobile devices and low-connectivity nodes -- exactly the settings where
federated privacy guarantees are most socially valuable -- are least able to
absorb the cost. Ciphertext compression and lightweight backends are equity
concerns as much as performance ones.

\paragraph{Honest framing of readiness.} The hybrid design is a necessity, not a
convenience: the pure-quantum diagnostic degrades monotonically, and periodic
key refresh in the current runtime is catastrophic relative to single-pass
aggregation. Readers should not extrapolate the reported numbers beyond the
configuration validated here.

\section{Use of Large Language Models}
\label{app:llm}

Large language models assisted with code development, figure generation, and polishing the manuscript's language. AI was also used for literature retrieval and discovery (e.g., to initially discover papers and place our claims in the broader context of the field). Finally, AI tools helped in generating the synthetic datasets used for the small-scale empirical evaluations, such as the synthetically scattered client rotation angles ($\sigma \in \{0.05, 0.2, 0.5, 1.0\}$ rad) used in the aggregation-geometry stress tests, and the seeded random angles generated for the hardware evaluations. All experimental design, final analysis, and scientific claims remain the authors' own work.

\end{document}